\documentclass[a4wide]{amsart}

\usepackage{amsmath}  
\usepackage{amsfonts}
\usepackage{times}
\usepackage{helvet}

\usepackage{array}
\usepackage{graphicx}
\usepackage{color}
\usepackage{mathtools}
\usepackage{bbm}
\usepackage[swedish,english]{babel}

\usepackage[latin1]{inputenc}
\usepackage[T1]{fontenc}




\newtheorem{theorem}{Theorem}[section]
\newtheorem{lemma}[theorem]{Lemma}

\newtheorem*{theorem*}{Theorem}
\theoremstyle{remark}
\newtheorem{remark}[theorem]{Remark}
\newtheorem{definition}[theorem]{Definition}

\newcommand{\R}{\mathbb{R}}

\newcommand{\s}{\mathrm{S}}
\newcommand{\bigoh}{\mathcal{O}}

\newcommand{\one}{\mathbbm{1} }
\newcommand{\norm}[1]{\|#1\|}

\newcommand{\di }{\mathrm{d}}

\newcommand{\fnm}{F_{\tiny N M}}
\newcommand{\barfnm}{\bar{F}_{\tiny N M}}
\newcommand{\fnmnm}{f_{N M}^{(n m)}}
\newcommand{\fnmettnoll}{f_{N M}^{(1 0)}}
\newcommand{\fnmnnoll}{f_{N M}^{(N 0)}}
\newcommand{\fnmlillannoll}{f_{N M}^{(n 0)}}


\title{The BGK equation as the limit of an $N$-particle system}

\newcommand{\eact}{{N+M-|V|^2}}

\makeatletter
\renewcommand*\env@matrix[1][\arraystretch]{%
  \edef\arraystretch{#1}%
  \hskip -\arraycolsep
  \let\@ifnextchar\new@ifnextchar
  \array{*\c@MaxMatrixCols c}}
\makeatother

\begin{document}

\author{Dawan Mustafa$^{(1)}$ and Bernt Wennberg$^{(2,3)}$}


\address{$^{(1)}$ University of Borås, SE50190 Borås,  Sweden}

\address{$^{(2)}$ Department of Mathematical Sciences \\ Chalmers
  University of Technology \\ SE41296 G\"oteborg \\ Sweden}

\address{$^{(3)}$ Department of Mathematical Sciences \\ University of
  Gothenburg \\ SE41296 G\"oteborg \\ Sweden}

\email{dawan.mustafa@hb.se, wennberg@chalmers.se}

\begin{abstract}
The spatially homogeneous BGK equation is obtained as the limit of a model of a
many particle system, similar to Mark Kac's charicature of the
spatially homogeneous Boltzmann equation.
\end{abstract}

\keywords{BGK equation, particle system, kinetic theory}

\maketitle 

\section{Introduction}
\label{sec_intro}

The BGK equation is named after P.L. Bhatnagar, E.P. Gross, and
M. Krook, who first presented it in an influential paper published in
1954~\cite{BhatnagarGrossKrook1954}. In its original form it is

\begin{align}
  \label{eq:1}
   \frac{\partial f}{\partial t} +
     v\cdot\nabla f -\frac{e \mathbf{E}}{m} \frac{\partial f}{\partial
     v} & = -\frac{n}{\sigma} f + \frac{n^2}{\sigma} \Phi\,.
\end{align}
Here $f=f(x,v,t)$ gives the number density of particles in phase-space
$(x,v)\in \R^3\times\R^3$. The constant  $\sigma>0$ controls collision
rate of particles, and $\Phi=\Phi_{q,T}$ is the Maxwellian distribution
\begin{align}
  \label{eq:2}
  \Phi(x,v,t)= \Phi_{q,T} &= \frac{m}{\left(2\pi k T(x,t)\right)^{3/2}}
      \exp\left(-\frac{m}{2kT(x,t)}\left(v-q(x,t)\right)^2\right)\;,
\end{align}
where  $n(x,t)$, $q(x,t)$, and $T(x,t)$ represent the local number
density, mean velocity and temperature respectively:
\begin{align}
  \label{eq:3}
n(x,t) &= \int f(v,x,t)\,\di v\,,\nonumber \\
q(x,t) &= \frac{1}{n(x,t)} \int v f(v,x,t)\,\di v\,,\\
\frac{3kT(x,v)}{m} &= \frac{1}{n(x,t)}\int (v-q(x,t))^2 f(v,x,t)\di v\,.
                     \nonumber 
\end{align}
The same kind of equation was formulated independently by
Welander~\cite{Welander1954}.  
In~\cite{BhatnagarGrossKrook1954}, one considers charged particles,
and $\mathbf{E}$ is the electric field computed from the particle
density. It is a model of the kinetic Boltzmann  equation with the
purpose of providing a numerically tractable model, while retaining
the most important aspects of the original 
Boltzmann equation: conservation of mass, momentum and energy,
convergence to a unique equilibrium state, monotonicity of entropy,
etc. And while easier from a computational point of view, it is
considerably more difficult to analyse mathematically, and most
theoretical results concerning 
existence and uniqueness of solutions to the BGK eqution actually hold
for a modified version where the right hand side is replaced by
\begin{align}
  \label{eq:4}
   -\frac{1}{\sigma} f + \frac{n}{\sigma} \Phi\,,
\end{align}
{\em i.e.} where the collision frequency is
constant~\cite{Perthame1989,PerthamePulvirenti1993}. There are also
results concerning solutions close to a global equilibrium, which hold
also  for density and temperature dependent collision
frequencies~\cite{Yun2010,Yun2015}. There is a rather  
large litterature concerning various aspects of the BGK-equation
dealing, for example,  with methods for numerical treatment of
 rarefied gases (some recent examples
are~\cite{Groppi_etal_2016,AsadzadehKazemiMokhtari2014,XuanXu2013}),
their fluid dynamical limits (see for example
\cite{SaintRaymond2002b,Colosqui2010,ChenOrszagStaroselsky2007}),
or models accounting for polyatomic 
gases or  mixtures of different gases (for example
in~\cite{AndriesAokiPerthame2002,GroppiSpiga2013,BisiCaceres2016}), to
give a few examples. A paper attempting to find a well-motivated
approximation of the collision frequenecy $1/\sigma$ can be found
in~\cite{Simons1972}.

The BGK equation is a fenomenological equation in the sense that it is derived
explicitly to satisfy certain physical properties of a dilute gas, but
until very recently there are very few published works attempting to justify
the equation directly from the dynamics of an $N$-particle
system. This is in contrast with the Boltzmann equation, for which
there is now a rigorous derivation starting from the Liouville
equation for hard sphere dynamcis, or for short range potentials,  of
an $N$-particle system~\cite{Gallagher_etal2013,Lanford1975}.

 The BGK equation without electric field can be interpreted  
as a model of a large system of particles, where each particle moves
independently with its own velocity. The  velocity
jumps at exponentially distributed intervals but remains constant in
between the jumps. The jump rate is proportional to the local density
of the gas, and after the jump the 
particle velocity is a normally distributed random variable,
independent of the initial velocity, but with mean and variance 
determined by the local temperature and mean velocity of the gas. 
One can make a similar interpretation of the Boltzmann equation for
hard spheres, but with two important differences. First,
in~(\ref{eq:1}), the  collision rate only depends on the local
density, and not on the velocity of the particles. In fact,
equation~(\ref{eq:1}) corresponds to a system of so-called Maxwellian
molecules and not to hard spheres.  The second, and
more important, difference lies in the distribution of velocities
of particles after a jump. The  Boltzmann equation for Maxwellian
molecules, which in similar notation is
\begin{align}
  \label{eq:5}
   \frac{\partial f}{\partial t} +
     v\cdot\nabla f & = -\frac{n}{\sigma} f + \frac{1}{\sigma}Q^+(f,f)\,,
\end{align}
represents a process in which the velocity of a particle after the
jump is given by the outcome of a random collision with a second
particle drawn from the distribution with density $v\mapsto f(x,v,t) /
\int_{\R^3} f(x,w,t)\,\di w$. The solutions to~(\ref{eq:5}) converge
to a Maxwellian distribution when $t\rightarrow\infty$, or
equivalently, when the average number of velocity jumps that one
particle has made, goes to  
infinity. In the BGK model, the velocity of a particle has a normal
distribution after only one jump, and a particle system that
converges to a solution of the BGK model must achieve that in the
limit of infinitely many particles.

The particle system that we propose  consists of particles that can have
two states, active and passive, where only the active particles
participate in collisions with other particles. The BGK equation will
describe the evolution of passive particles in the limit of infinitely
many particles. One may think of the active particles being ions, that
interact at a high rate with each other, the passive ones being
neutrals that do not interact. On the other hand, a neutral particle and an ion
may encounter and interchange state by the transfer of an
electron, so that the result is similar to allowing the
velocity of a neutral particle to jump to a random velocity given by the
distribution of the acitve particles. And if the collision rate for
active particles is very high, then the active particles will have
time to come close to an equilibrium distribution before the next
exchange with the passive particles takes place.

At a formal level, one may actually pursue these ideas to derive a BGK
equation of the form~(\ref{eq:1}), or a hard sphere version of the
same, but to make  a completely rigorous derivation along the lines of
for example~\cite{Gallagher_etal2013} seems to be difficult~\cite{JungWennberg_inprogress}. 

An alternativ approach has been
developed in~\cite{Butta2020particle}, where an $N$-particle system is
constructed in which  the particles are given a normal
velocity distribution after a jump, with moments computed from the
empirical distributions. The authors prove rigorously that the
$N$-particle model converges to the BGK model in the limit of $N$ going
to infinity.  

Long before a rigorous result on the validity of the Boltzmann equation had
been obtained for a real particle system, Mark Kac \cite{Kac1957}
proposed a Markov 
jump process for the velocities of an $N$-particle distribution, and
proved that in the limit as $N\rightarrow\infty$, the velocity
distribution of one particle converges to the solution of a
Boltzmann-like equation for a spatially homogenous gas of Maxwellian
molecules with one-dimensional velocities. 

In this paper we  construct a Kac-type model of a system of $N$
passive and $M$ active particles, and a jump process involving
collisions between active particles and the switch between active and
passive state, as described above. We then prove that the one-particle
distribution for passive particles converges to a BGK equation of the
form
\begin{align}
  \label{eq:6}
  \partial_t f(v,t) &= \mathcal{M}(v,t) - f(v,t)
\end{align}
where $\mathcal{M}$ is the standard normal distribution in one
dimension. This limit can be obtained in a scaling
where $M/N\rightarrow 0$ when $N\rightarrow\infty$, that is, when the
fraction of active particles vanishes in the limit of infinitely many particles.

The paper is based on the results in the doctoral thesis of the first
author~\cite{Mustafa2015}. A very  similar model, with two different kinds of
particles, has been presented by Bonetto
et. al. in~\cite{Bonettoetal2014b}, and also in~\cite{Bonetto_etal2017}.
The authors are in general interested in kinetic models coupled with a
thermostat, and in the cited papers the larger set of particles ($N$ in
our paper) is  considered as a thermostat acting on the smaller set of
particles, and they prove that indeed, when $N\rightarrow\infty$ the
large $N$-particle system is a good approximation of a Gaussian thermostat.
 Related results can also be found in~\cite{Tossounian_etal2015}.

The paper is organised as follows: In Section~\ref{sec:model}, we
discuss Markov jump processes that give BGK-like equations in the
limit of infinitely many particles, and present in full detail our
final model. In Section~\ref{sec:initialsteps} we introduce some
further notation, and present the initial steps of the proof. An
important step of the proof is to show that the energy partition
between the passive and active particles in the limit is such that the
mean energy for the active particles is one. This is proven in
Section~\ref{sec:moments} by
performing very explicit calculations of certain moments of the solutions.
The proof is then concluded in Section~\ref{sec:final}.

\section{The particle system, and its limiting kinetic equation}
\label{sec:model}

We consider a particle system consisting of $N+M$ particles, where $N$
is number of \emph{passive} particles represented by
$V=(v_1,\dots,v_N)\in\R^{N}$, and $M$ is number of \emph{active} particles
represented by $W=(w_1,\dots,w_M)\in\R^{M}$. One active particle is
assumed to have the same mass as one passive
particle, and here that mass is set to $1$. The total kinetic energy,
which thefore is  $\frac{1}{2}\left( v_1^2+\cdots+ v_N^2 +w_1^2
  +\cdots + w_M^2\right)$,   is
assumed to be conserved, and therfore  the state space of the particle system
is $\s^{N+M-1}(\sqrt{N+M})$, the $N+M-1$-dimensional sphere of radius
$\sqrt{N+M}$.  Throughout the paper we also assume that $N>M$.

The dynamics of the system consist of two independent jump
processes. The first one is the so-called Kac walk, which mimics the
pairwise collisions of a rarefied gas. This process involves only the
active particles.  The second process involves a pair consisting of
one active and one passive particle, and leads to an exchange of
state: the passive particle becomes active, while retaining its
velocity, and the active particle becomes passive. In this way there
is an exchange of energy between the two sets of particles.

The Kac walk on the set of active particles is defined as follows:
\begin{itemize}
\item the jumps occur at exponentially distributed intervals with
  rate  $\lambda_2 N M$.
\item in a jump, a pair $(w_i, w_j)$ is is chosen uniformly among the
  active particles, and and $\theta\in[-\pi,\pi[$ is drawn from an
  even distribution. Then  $(w_i, w_j)\mapsto
  (w_i\cos\theta-w_j\sin\theta, w_i\sin \theta + w_j\cos\theta)$. With
  $W=(w_1,...,w_N)$, this jump is denoted $W\mapsto R_{i,j}(\theta) W$.
\end{itemize}
The jump rate for exchange between active and passive particles is
chosen as $ \lambda_1 N$, which means that the jump rate for a given
indexed passive particle $v_j$ is $\lambda_1$, independently of $N$, and
that the rate at which active particles become passive is $\lambda_1 N
M^{-1}$ per active particle. Therefore  the subsystem of active
particles on average experience $\frac{\lambda_2}{\lambda_1} M $ jumps
of Kac-type between two exchange jumps. Without loss of generality we
set is $\lambda_1=1$, and denote the parameter $\lambda_2$ simply as $\lambda$ in what follows.  

The intuitive picture is this: consider a time interval $[t_1,t_2[$,
where the end points are given by two consecutive exchange events. In
this interval the vector $V=(v_1,...,v_N)$ is unchanged, and the
vector $W=(w_1,...,w_M)$ will make on the order of
$\lambda M N (t_2-t_1)\sim \lambda M$ steps in the Kac walk. The energy
of the set of 
active particles is conserved by this process, and hence
$|W|^2=w_1^2+\ldots+w_M^2$ is constant. If $\lambda$ is
very large, the Kac walk will drive the distribution of $W$ to an
almost uniform distribution on the sphere defined by $|W(t_1)|^2$. At
$t_2$ a new exchange event takes place, when a randomly chosen active
particle becomes passive, and hence the set of passive particles will
gain a particle drawn from a distribution which is the marginal of the
uniform distribution of an $M-1$-dimensional sphere. But this marginal
distribution is close to a Gaussian when $M$ is large, and therefore,
looking only at the distribution of passive particles, this will loose
particles at exponential rate $\lambda_1=1$ per passive particle, and
gain particles drawn from a Gaussian distribution with the same rate;
this is the BGK-process for a spatially homogenous gas.

All of this can be quantified, but some notation is needed in order to
formulate a theorem. First of all we define the master equation, or
forward Kolmogorov equation, corresponding to the jump process.
Let $\fnm(V,W,t)$ be the probability density with respect to the induced
measure 
$\sigma$ \footnote{The symbol $\sigma$ is used throughout
  the paper to  
  denote the measure on the sphere $\s^{n-1}(r) $ induced from the
  Euclidian measure in $\R^n$, and therefore it is defined only in combination
  with the domain of integration.} on $\s^{N+M-1}(\sqrt{N+M})$ for the velocities of the
particles at time $t$. The time evolution of $\fnm$ is given
by the equation:
\begin{equation}\label{mastereqVW}
\frac{\partial }{\partial t}\fnm(V,W,t)=(L_{NM\lambda} +  U_{NM})\fnm(V,W,t),
\end{equation}
where
\begin{equation}\label{L}
L_{NM\lambda} \fnm(V,W)= {\textstyle \frac{2N\lambda}{M-1}}\sum_{1\leq j<k\leq M}\int_{0}^{2\pi}\left(\fnm(V,R_{jk}(\theta)W)-\fnm(V,W)\right)\ \frac{d\theta}{2\pi},
\end{equation}
and
\begin{equation}\label{U}
U_{NM}\fnm(V,W)={\textstyle \frac{1}{M}}\sum_{j=1}^{N}\sum_{k=1}^{M}\left(
  \fnm(V_{w_k}^j,W_{v_j}^k)-\fnm(V,W)\right)\,.
\end{equation}
The operator $L_{NM\lambda}$ defined in~(\ref{L}) is the generator
of the original 
Kac master equation acting on the $W$-variables, but with a factor
$\lambda N$ in front, to give the jump rate as described above. The
operator $U_{NM}$ defined in~(\ref{U}), with 
\begin{equation}
  \label{eq:Vwkjdef}
  (V_{w_k}^j, W_{v_j}^k) =(\underbrace{v_1,\dots,v_{j-1},w_k,v_{j+1},\dots,v_N}_
  {V_{w_k}^j},\underbrace{w_1,\dots,w_{k-1},v_j,w_{k+1},\dots,w_M}_{W_{v_j}^k})
\end{equation}
is the generator of the exchange process, when a passive and an active
particle exchange their state.

An essential assumption here, just like in Kac's orginal work, is that
$\fnm$ is symmetric with respect to permutations of the coordinates of
$V$ and of the coordinates of $W$. This is to say that all passive
particles are identical, and identically distributed, and that the
same holds for the active particles. Hence any choice of $n$ passive
particles is equivalent to choosing the first $n$. The following
notation will be useful:
\begin{align}
  \label{eq:VnWmdef}
  V_n&= (v_1,...,v_n)\qquad\mbox{and} \qquad V^n= (v_{n+1},...,v_N)\\
  W_m&= (w_1,...,w_m)\qquad\mbox{and} \qquad W^m= (w_{m+1},...,w_m)\,.\nonumber
\end{align}

\begin{definition}
  \label{def:1}
The $(n,m)$-marginals $\fnmnm(V_n,W_m)$ of $\fnm(V,W)$ are given 
by the equation 
\begin{equation}
\begin{split}
   \int\limits_{\mathclap{\s^{\cramped{N+M-1}}(\sqrt{N+M})}}&\quad g(V_n) h(W_m)\fnm(V,W)d
   \sigma   \\  
    &=\int_{\R^n}\int_{\R^m}g(V_n) h(W_m)\fnmnm(V_n,W_m)\
    dV_n \ dW_m, 
\end{split}
\end{equation}
where $g(V_n)$ and $h(W_m)$ are any bounded continuous functions
on $\R^n$, $\R^m$, respectively, and we assume that $\fnmnm$ has support in $\{V_n^2+W_m^2\le N+M\}$. 
\end{definition}

The objective of this paper is to prove that when
$N,M\rightarrow\infty$ the density of one passive particle,
$\fnmettnoll(v,t)$ converges to a function
$f(v,t)$ that satisfies the spatially homogeneous
BGK equation~(\ref{eq:5}). This is formulated in the following theorem:
\begin{theorem}
  \label{thm:main}
Let  $\{F_{NM}(V,W,t) \}_{N,M}$ be the solutions of a family of master
equations (\ref{mastereqVW}) with $1\le M < N <\infty$, with 
initial data $\fnm(V,W,0)$ satisfying
\begin{equation}
  \label{eq:223}
  \begin{split}
&\int\limits_{\Omega_{0,0}}|\fnm(V,W,0)|^2\ d\sigma(V,W)\leq C_{N M}^2 
\hspace{0.5cm} ,  \hspace{0.5cm}
\int\limits_{\Omega_{0,0}}\fnm(V,W,0)v_1^4\ d\sigma(V,W) <\infty,\\ 
&\mbox{and} \\
&
\frac{1}{M}\int\limits_{\Omega_{0,0}}\fnm(V,W,0)\left(\frac{1}{M}\sum_{k=1}^M
  w_k^4\right) \ d\sigma(V,W)\rightarrow 0\qquad\mbox{when}\quad
M\rightarrow\infty\,. 
\end{split}
\end{equation}
Let $M=M(N)$, $\lambda=\lambda(N)$ be such that $N/M \rightarrow
\infty$, $N/M^2 
\rightarrow 0$ and $\lambda/C_{NM} \rightarrow \infty $ when $N
\rightarrow \infty$. 
Then, for $0<t_0 \leq t \leq T$, \qquad $T<\infty$
\begin{equation}
\label{eq:224}
\lim_{N\rightarrow \infty}
  \frac{\partial}{\partial t} \int\limits_{\mathclap{v_1^2<N+M}}\quad
  \fnmettnoll(v_1,t) g(v_1)\ dv_1 
=\int_{\R}\left(\mathcal{M}(v_1)-f(v_1,t)\right)g(v_1) \ dv_1, 
\end{equation}
where
\begin{equation}
\label{eq:225}
f(v_1,t)=\lim_{N,M \rightarrow \infty}\fnmettnoll(v_1,t)\,,
\end{equation}
and where $f(v_1,t)$ solves the homogeneous BGK equation,
\begin{align}
  \frac{\partial}{\partial t} f(v_1,t) &= \mathcal{M}(v_1)-f(v_1,t)\,.
\end{align}

\end{theorem}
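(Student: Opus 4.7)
The plan is to integrate the master equation against a test function $g(v_1)$ to derive a weak evolution for $\fnmettnoll$, and then to identify the limit of its only non-trivial term. Multiplying (\ref{mastereqVW}) by $g(v_1)$ and integrating over $\s^{N+M-1}(\sqrt{N+M})$, the $L_{NM\lambda}$ contribution vanishes: each Kac rotation $R_{jk}(\theta)$ fixes $v_1$ and acts as an isometry on $W$, so a change of variables in the first summand of (\ref{L}) cancels the two terms pairwise. For $U_{NM}$, the measure-preserving involution $(V,W)\mapsto(V_{w_k}^j,W_{v_j}^k)$ combined with the symmetry of $\fnm$ in the $V$- and $W$-coordinates reduces the double sum: summands with $j\ne 1$ cancel against the corresponding $-\fnm$ term after relabeling, while the $j=1$ summands produce a single gain term. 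The resulting weak identity is
\begin{equation*}
\frac{d}{dt}\int g(v_1)\fnmettnoll(v_1,t)\,dv_1 = \int g(w_1) f_{NM}^{(01)}(w_1,t)\,dw_1 - \int g(v_1)\fnmettnoll(v_1,t)\,dv_1,
\end{equation*}
so the theorem reduces to showing that $f_{NM}^{(01)}(\cdot,t) \to \mathcal{M}$ weakly on $[t_0,T]$.

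For this convergence I would combine three ingredients. First, the same integration technique applied with $v_1^2$ and $w_1^2$ in place of $g(v_1)$ yields $\frac{d}{dt}\Ee[v_1^2] = \Ee[w_1^2] - \Ee[v_1^2]$ and $\frac{d}{dt}\Ee[w_1^2] = \frac{N}{M}(\Ee[v_1^2] - \Ee[w_1^2])$, which together with the conservation identity $N\Ee[v_1^2] + M\Ee[w_1^2] = N+M$ drives $\Ee[w_1^2] \to 1$ at exponential rate $N/M + 1 \to \infty$; the explicit fourth-moment calculations of Section~\ref{sec:moments} then propagate the initial bound on $\frac{1}{M}\sum w_k^4$ through the dynamics and yield $L^2$-concentration of $|W|^2/M$ about $1$. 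Second, the uniform spectral gap for the Kac master equation, applied with total rate $\lambda N$ and the hypothesis $\lambda/C_{NM} \to \infty$, drives $\fnm(V,\cdot,t)$ close in $L^2$ to its spherical average over the energy shell $\{|W|^2 = N+M - |V|^2\}$, uniformly for $t \ge t_0$. Third, the marginal of the uniform probability on $\s^{M-1}(\sqrt{M})$ converges weakly to $\mathcal{M}$ as $M \to \infty$, a classical computation. Composing the three reductions gives $f_{NM}^{(01)}(\cdot,t) \to \mathcal{M}$, and passing to the limit in the weak identity above recovers (\ref{eq:224}).

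The principal obstacle is the $L^2$-concentration of $|W|^2/M$ about $1$ required in the first ingredient: the Gaussian-marginal step needs control of the fluctuation of the random shell radius, not just its mean, so $\Ee[w_1^2]\to 1$ alone is insufficient. The scaling hypothesis $N/M^2 \to 0$ together with the initial fourth-moment condition in (\ref{eq:223}) are exactly tuned for this variance estimate to close, but propagating the fourth moment through $L_{NM\lambda} + U_{NM}$ needs the somewhat delicate combinatorial bookkeeping of Section~\ref{sec:moments}. Once those moment bounds are in hand the Kac equilibration argument and the convergence to the Gaussian marginal are standard, and combining them with the identity of the first paragraph completes the proof.
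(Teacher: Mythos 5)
Your proposal is correct and follows essentially the same route as the paper: the exact weak identity you derive (reducing the problem to showing that the one-active-particle marginal $f_{NM}^{(0,1)}(\cdot,t)$ converges weakly to $\mathcal{M}$) is precisely the paper's decomposition into the terms $a$, $b$, $c$ and $I_2$, and your three ingredients correspond one-to-one to the paper's spectral-gap estimate (Lemmas~\ref{LemmaSpec} and~\ref{lem:I2estimate}), the moment/concentration analysis of $\tau(V)$ (Lemma~\ref{lem:moments} and Lemma~\ref{thmfhat}), and the classical Gaussian limit of the sphere marginal in equation~(\ref{eq:016a}). You also correctly identify the crux — that $L^2$-concentration of $|W|^2/M$ about $1$, not merely $\Ee[w_1^2]\to 1$, is what the fourth-moment hypotheses and the scaling $N/M^2\to 0$ are tuned to deliver — which is exactly the content of Section~\ref{sec:moments}.
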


So, at least weakly, the one-particle marginal of the $N+M$
dimensional particle system converges to the solution of a BGK
equation, as announced in the introduction. The theorem is stated to
hold uniformly for $t\ge t_0>0$, but to achieve convergence uniformly
for all $t>0$ one must make stronger assumptions on initial data. If
the initial data are chaotic, {\em i.e.} meaning that the
many-particle marginals are close to products of functions of the
coordinates, the $L^2$-norm in the theorem grows exponentially in
$N+M$, and choosing $C_{N M} \sim c^{N+M}$ in equation~(\ref{eq:223})
gives natural class of inital data for which the theorem holds. 

\begin{remark}
  An important notion in kinetic theory is that of propagation of
  chaos, which was made precise in Kac's paper~\cite{Kac1957}. In the
  present context we would say that $\{ F_{N M} \}$ is a chaotic
  family if the marginals $\{ \fnmnm\}$ satisfy
  \begin{align}
    \lim_{N,M \rightarrow\infty} \fnmlillannoll(v_1,...,v_n) =
    \prod_{j=1}^n \lim_{N,M \rightarrow\infty} \fnmettnoll(v_j)\,,
  \end{align}
and that propagation of chaos holds if the same property holds for all
times provided it holds initially. Here only marginals with respect to
the $v$ variables are included because the limiting equation only
involves the distribution of passive particles. These do not interact
directly, but jump almost independently. When $N$ and $M$ are bounded,
some correlation is created because each jump of a passive particle
changes the distribution of the active particles, the effect of this
vanishes when the number of active particles, $M$,  increase to
infinity. A more rigorous statement can be made from the observation
that the proof of Theorem~\ref{thm:main} with very small changes shows that
equation~(\ref{eq:224}) also holds for the marignals $f_{NM}^{(2 0)}$,
whith the Maxwellian $\mathcal{M}(v_1)$ replaced by a bivariate
Maxwellian $\mathcal{M}(v_1,v_2)$, which itself factorizes.
\end{remark}

\section{Initial steps of the proof}
\label{sec:initialsteps}

To prepare for the proof of Theorem~\ref{thm:main}, we first present a
few well-known formulae concerning spheres. A first observation is that
although the $V$- and $W$-variables represent particles in different
states, they behave exactly as variables for integration over the
sphere $\Omega_{0,0}= \{|V|^2+|W|^2=N+M\}$, and therefore, for any
function $G(V,W)$, we 
have
\begin{align}
  \label{eq:16b}
  \int_{\Omega_{0,0}} G(V_{w_k}^j,W_{v_j}^k)\,d\sigma(V,W) &=
  \int_{\Omega_{0,0}} G(V,W)\,d\sigma(V,W)\,.
\end{align}

The area of an $n-1$ dimensional sphere of radius $r$,

$\s^{n-1}(r)=\left\{
  (x_1,...,x_n) \in\R^n\,\big|\, x_1^2+x_2^2\dots+x_n^2=r^2\right\}$
is given by
\begin{align}
  \label{eq:014}\\
  \nonumber
  \left|\s^{n-1}(r)\right| &=  r^{n-1} \frac{2
           \pi^{\frac{n}{2}}}{\Gamma\left(\frac{n}{2}\right)}
            \,=\, r^{n-1} \left|\s^{n-1}\right|\,.
\end{align}
For any function $f$ defined on $\s^{n-1}(r)$ one may write
\begin{align}
  \label{eq:014b}
  \int\limits_{\s^{n-1}(r)} f(x_1,...,x_n) \,d\sigma(x_1,...,x_n) &=\\
  \nonumber
  \int\limits_{x_1^2+\dots+x_k^2\le r^2}
 {\tiny \left(\frac{r^2}{r^2-x_1^2-\dots-x_k^2}\right)^{1/2}}
  &\int\limits_{\mathclap{\s^{n-k-1}\left(\sqrt{r^2-x_1^2-\dots-x_k^2}\right)}}
  f(x_1,\dots,x_n)\, d\sigma(x_{k+1},\dots,x_n)\,dx_1dx_2\dots dx_k\,.
\end{align}
Therefore the marginals defined in Definition~\ref{def:1} may be
written explicitly as
\begin{equation}\label{nmmarginal}
\fnmnm(V_n,W_m)=\Gamma_{n,m}\int\limits_{\Omega_{n,m}}\fnm(V,W)\
                       d\sigma(V^n,W^m)\,, 
\end{equation}
where we introduce the notation
\begin{align}
  \label{eq:015b}
  \Gamma_{n,m}&=
    \frac{(N+M)^{1/2}}{(N+M-|V_n|^2-|W_m|^2)^{1/2}},
                \qquad\mbox{and}
  \\  
  \Omega_{n,m}
       &= \s^{N+M-n-m-1}\left(\sqrt{N+M-|V_n|^2-|W_m|^2}\right)\,.
        \nonumber         
\end{align}

We also define the average with respect to the $W$-variables as follows:
\begin{definition}
  \label{def:2}
  Let $\fnm(V,W)\in L^1\left(\s^{N+M-1}(r)\right)$. Then
  \begin{align}
    \label{eq:def2eq}
    \barfnm(V) &= \left|\s^{M-1}\left(\sqrt{r^2-|V|^2}\right)\right|^{-1}
    \int\limits_{\s^{M-1}\left(\sqrt{r^2-|V|^2}\right)}\fnm(V,W)\,d\sigma(W)\,.
    \end{align}
\end{definition}
Finally we compute the marginal of the first coordinate of a point
chosen uniformly on an $M$-dimensional sphere of radius $\sqrt{M}$. The
uniform density is given by the constant function
$\left|\s^{M-1}\right|^{-1} M^{-(M-1)/2}$, and hence the marginal of the
first coordinate is
\begin{align}
  \label{eq:016a}
  {\mathcal M}_M(x_1) &=
   \left(\frac{M}{M-x_1^2}\right)^{1/2} 
   \int\limits_{\s^{M-2}\left(\sqrt{M-x_1^2}\right)}
                \left|\s^{M-1}\right|^{-1} M^{-(M-1)/2}
                \,d\sigma(x_2,...,x_M)
                \nonumber \\
              &= \left(\frac{M}{M-x_1^2}\right)^{1/2}
                \frac{ \left|\s^{M-2}\right|}{ \left|\s^{M-1}\right|}
                \frac{(M-x_1^2)^{(M-2)/2}}{M^{(M-1)/2}} \,=\,\frac{1}{\sqrt{M}} \frac{ \left|\s^{M-2}\right|}{ \left|\s^{M-1}\right|}
                \left(1-\frac{x_1^2}{M}\right)^{(M-3)/2}\\
                &  \underset{M\rightarrow\infty}{\longrightarrow}
                \frac{1}{\sqrt{2\pi}} e^{-x_1^2/2}\,=\,{\mathcal
                  M}(x_1)\,.\nonumber  
   \end{align}

The first step in our proof of Theorem~\ref{thm:main} is to integrate
over the $W$-variables in Equation~(\ref{mastereqVW}), to find an
evolution equation for the $V$-marginal of $F_{N,M}$. It is
\begin{equation}
  \label{eq:N0margA}
\begin{split}
   \frac{\partial}{\partial t}& \fnmnnoll(V,t) \\
   &=\Gamma_{N,0}
   \int\limits_{\Omega_{N,0}}\left(L_{NM\lambda} +
   U_{NM}\right)\fnm(V,W,t)\ d\sigma(W)  \\ 
     &=\sum_{j=1}^{N}\
     \Gamma_{N,0}\int\limits_{\Omega_{N,0}}\left(
       \fnm(V_{w_1}^j,W_{v_j}^1,t)-\fnm(V,W,t)\right) \ d\sigma(W)\,.
\end{split}
\end{equation}
The integral of $ L_{NM\lambda}\fnm(V,W,t)$ vanishes because the
generator of the Kac walk conserves mass, and we also use the symmetry
with respect to permutations of the $W$-coordinates to replace the sum
over $k$ in~(\ref{U}) with $M$ terms, all involving $w_1$. Adding and
subtracting $\barfnm(V_{w_1}^j)$ we obtain
\begin{equation}
  \label{eq:N0marg}
\begin{split}
   \frac{\partial}{\partial t}& \fnmnnoll(V,t) \\
   &=\sum_{j=1}^{N}
     \Gamma_{N,0}\int\limits_{\Omega_{N,0}}\left(\barfnm(V_{w_1}^j,t)
       -\fnm(V,W,t)\right) \ d\sigma(W) \\
          &\qquad +\sum_{j=1}^{N}
     \Gamma_{N,0}\int\limits_{\Omega_{N,0}}\left(
       \fnm(V_{w_1}^j,W_{v_j}^1,t)-\barfnm(V_{w_1}^j,t)\right) \
     d\sigma(W)\\
     &= I_1(V) + I_2(V)\,.
\end{split}
\end{equation}
We will show that by a suitable choice of $\lambda$, which is hidden
here because it only affects the Kac operator $L_{N,M,\lambda}$, the
term $I_2(V)$ vanishes in the  limit, and hence that the evolution of
$\fnmnnoll(V)$ essentially is governed by $I_1(V)$, which in turn
will reproduce the righthand side of the BGK equation in the limit
when $N,M\rightarrow\infty$.

We denote the $N$ terms of $I_1(V)$ as $I_{1,j}(V)$, so that
$I_1(V)=\sum_{j=1}^N  I_{1,j}(V)$.
For $j\ne 1$, and for any function $g\in C(\R)$, 
\begin{align*}
  \int\limits_{\mathclap{|V|^2\le N+M}} g(v_1)\,  I_{1,j}(V)\,dV
  &= \int\limits_{\mathclap{|V|^2+|W|^2=N+M}} g(v_1)
    \left(\barfnm(V_{w_j^1}^j,t)-\barfnm(V,t)\right)\,d\sigma(V,W) = 0\,,
\end{align*}
by the argument in equation~(\ref{eq:16b}), and therefore it is
sufficient to consider the
first term,  $I_{1,1}(V)$.
Using Equation~(\ref{nmmarginal}) and Definition~\ref{def:2}, the
integral in this term is 
\begin{align*}
  \int\limits_{\Omega_{N,0}}&
    \barfnm(V_{w_1}^1,t) d\sigma(W) - \left|\Omega_{N,0} \right| \barfnm(V,t)\\
  &=  \left|\Omega_{N,0} \right| \frac{1}{\s^{M-1}\left(\sqrt{M}\right)}
  \int\limits_{\s^{M-1}\left(\sqrt{M}\right)}
  \barfnm(V_{\tilde{w}_1 \sqrt{\tau(V)}}^1,t)\,d\sigma(\widetilde{W})
  - \left|\Omega_{N,0} \right| \barfnm(V,t)\,,
\end{align*}
where  we have made the change of variables $W\mapsto\sqrt{\tau(V)}
\widetilde{W}$, with 
\begin{align}
  \label{eq:taudef}
  \tau(V) = \frac{N+M-|V|^2}{M}\,,
\end{align}
the average energy per active particle. The integral is then the
marginal distribution of the uniform density over an
$M-1$-dimensional sphere, as in Equation~(\ref{eq:016a}), so that
finally the $I_{1,1}(V)$ becomes
\begin{align*}
  \begin{split}
 \Gamma_{N,0}  \left|\Omega_{N,0}\right|
           &\int\limits_{-\sqrt{M}}^{\sqrt{M}}
           \barfnm\left(V_{w \sqrt{\tau(V)}}^1,t\right) \Psi_M(w) \,dw -
          \Gamma_{N,0}  \left|\Omega_{N,0} \right| \barfnm(V,t) = \\
     &\int\limits_{-\sqrt{ N+M-|V|^2}}^{\sqrt{ N+M-|V|^2}}
     \frac{\Psi_M\left( v_1/ \sqrt{\tau(V^1_{w_1})}\right)}
     {\sqrt{\tau(V^1_{w_1})}} \fnmnnoll(V_{w_1}^1,t)dw_1 -
     \fnmnnoll(V,t)   
    \,.
    \end{split}
\end{align*}

This integral can now be written as a sum of three terms as follows:
\begin{equation}
\label{eq:threeterms}
\begin{split}
&\displaystyle\int\limits_{-\sqrt{\eact}}^{\sqrt{\eact}}
\mathcal{M}(v_1)\fnmnnoll(V_{w_1}^1,t)\ dw_1- \fnmnnoll(V,t)\\
&\hspace{1cm}+\displaystyle\int\limits_{-\sqrt{\eact}}^{\sqrt{\eact}}
\left(\mathcal{M}_M(v_1)-\mathcal{M}(v_1)\right)\fnmnnoll(V_{w_1}^1,t)\ dw_1\\
&\hspace{1cm}+\displaystyle\int\limits_{-\sqrt{\eact}}^{\sqrt{\eact}}
\left(\frac{\mathcal{M}_M\left(v_1/\sqrt{\tau(V_{w_1}^1)}\right)}
  {\sqrt{\tau(V_{w_1}^1)}}-\mathcal{M}_M(v_1)\right)\fnmnnoll(V_{w_1}^1,t)\  dw_1\\
&\hspace{0.45cm}:=a(V,t)+b(V,t)+c(V,t).
\end{split}
\end{equation}

Consider the first of these terms, $a(V,t)$. For arbitrary $g(v)\in C(\R)$, 
\begin{align}
  \label{eq:27_0}
  \int\limits_{|V|^2\le N+M}  a(V,t) g(v_1)\,dV
   &=  \int\limits_{v_1^2\le N+M} \left( \mathcal{M}(v_1) - 
    \fnmettnoll(v_1,t)\right)g(v_1)\,dv_1\,,  
\end{align}
which converges to the right-hand side of equation (\ref{eq:224}) in
Theorem~\ref{thm:main}.  Therefore  the proof can be concluded by
proving that the other terms vanish.

The second term, an integral of $b(V,t)$, converges to zero, because
\begin{equation}
  \label{eq:b11}
  \begin{split}
  \int\limits_{|V|^2\le N+M}   g(v_1) b(V,t)\,dV
  &= \int\limits_{|V|^2+w_1^2\le N+M} g(v_1)\left(
    \mathcal{M}_M(v_1)-\mathcal{M}(v_1) \right) \fnmnnoll(V_{w_1}^1,t)\,dw_1 dV\\
 &= \int\limits_{|V|^2\le N+M}
 \int\limits_{-\sqrt{N+M-|V|^2}}^{\sqrt{N+M-|V|^2}}
     \left(\mathcal{M}_M(u)-\mathcal{M}(u)\right) g(u) \, du \, \fnmnnoll(V,t)\,dV\,, 
\end{split}
\end{equation}
and we know that the $\mathcal{M}_M(w) \rightarrow \mathcal{M}(w)$ pointwise,
when $M\rightarrow\infty$.

Of the three terms $a(V,t), b(V,t)$, and $c(V,t)$, the last one is the
most difficult  
to analyse. This is the subject of Section~\ref{sec:moments}, where it
is proven that on the domain of integration, we have
$\tau(V)\rightarrow 1$ when $N,M\rightarrow\infty$ under the
constraints given in Theorem~\ref{thm:main}.

The proof of Theorem~\ref{thm:main} can be concluded with these three
estimates, together with a proof that $\int g(v_1)
I_2(V)\,dv\rightarrow 0$  when $N,M\rightarrow\infty$. The remainig
part of this section is devoted to  proving that $I_2(V)$
converges to zero with suitable choices of $N,M$ and $\lambda$.
The result is
largely due to Lemma~\ref{LemmaSpec} below, which in turn follows from
a result on the spectral gap for the generator $L_{NM\lambda}$ of
the Kac walk. Without the operator $U_{NM}$, the
generator for 
  the exchange between passive and active  
particles, equation~(\ref{mastereqVW}) becomes
\begin{align*}
  \frac{\partial}{\partial t} \fnm(V,W,t) &= L_{NM\lambda} \fnm(V,W,t)\,,
\end{align*}
which is the original Kac master equation in the $M$ variables
$(w_1,...,w_M)$, with the $V$-variables appearing only as
parameters. Kac conjectured that
the spectral gap $\Delta_M$
of $-\mathcal{L}_{M}=-(\lambda N)^{-1} L_{NM\lambda}$ is
bounded away from 
$0$, uniformly in the number of particles $M$. The parameter
$V$ is of course not present i Kac's work, but it doesn't have an
influence on the spectral gap, because this gap does not depend on the
total energy of the system, and $\lambda N$  only serves to increase
the jump rate.  Kac's conjecture was first proved by  
Janvresse, \cite{Janvresse2001}, and an exact formula for the  gap of
$-\mathcal{L}_{M}$ was 
later obtained by Carlen, Carvalho and Loss in
\cite{CarlenCarvalhoLoss2003}:
\begin{equation}
  \label{eq:specgap}
\Delta_M=\frac{1}{2}\frac{M+2}{M-1} \ge \frac{1}{2} \,, 
\end{equation}
and the corresponding eigenfunction is
\begin{align}
  \label{eq:eigfunc}
  \phi_{\Delta_M}&=\sum_{j=1}^M \left(   w_j^4- \frac{3M}{M+2}\left(\frac{r^2}{M}\right)^{2} \right)\,,
\end{align}
where $r$ is the radius of the $M-1$-dimensional sphere.
In fact, this eigenvalue and eigenfunction were computed also
in~\cite{Henin1974}, but without a proof that this is also determines
the spectral gap.

It follows that in an interval $t_1<t<t_2$ defined by two consecutive
exchange events,
\begin{equation}\label{spectralgap}
||\fnm(V,\cdot,t)-\barfnm(V,t) ||_2 \leq e^{-\frac{ N \lambda}{2}(t-t_1)}
||\fnm(V,\cdot,t_1)-\barfnm(V,t_1) ||_2\,.
\end{equation}
With $\lambda$ very large, the term $U_{NM}
F$ can be considered to be a small perturbation, which is expressed in
the following lemma: 

\begin{lemma}\label{LemmaSpec}
Let  $\fnm(V,W,t)$ be a solution to equation (\ref{mastereqVW}), and let
$\barfnm(V,t)$ be defined by Definition~\ref{def:2}. Then, for all $t\geq
0$ and $V\in \R^N$, 
\begin{equation}
\begin{split}
\norm{\fnm(V,\cdot,t)-\barfnm(V,t)}_2
     & \leq e^{-\frac{tN\lambda}{2}}\norm{\fnm(V,\cdot,0)-\barfnm(V,0)}_2\\
    &\hspace{0.3  cm}+\frac{1-e^{-\frac{tN\lambda}{2}}}{2N\lambda}
     \sup_{0\leq s\leq t}\norm{U_{NM}\fnm(V,\cdot,s)-\overline{U_{NM}F}(V,s)}_2,
\end{split}
\end{equation}
where the norm is in $L^2 \left(\Omega_{N,0},d\sigma(W) \right)$.
\end{lemma}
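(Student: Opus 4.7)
The plan is to set $G(V,W,t) := \fnm(V,W,t) - \barfnm(V,t)$ and derive a perturbed version of inequality (\ref{spectralgap}) for $G$ using a Duhamel/Gronwall argument based on the Carlen--Carvalho--Loss spectral gap bound (\ref{eq:specgap}).

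First I would compute the evolution equation for $G$. Two observations make this clean: (i) the Kac generator $L_{NM\lambda}$ annihilates any function that is constant in $W$, so $L_{NM\lambda}\barfnm = 0$; and (ii) the uniform measure on $\Omega_{N,0}$ is invariant under each rotation $R_{jk}(\theta)$, so $\int_{\Omega_{N,0}} L_{NM\lambda}\fnm \,d\sigma(W) = 0$, and therefore averaging the master equation (\ref{mastereqVW}) in $W$ gives $\partial_t \barfnm = \overline{U_{NM}F}$. Subtracting from (\ref{mastereqVW}) yields
\begin{equation*}
  \partial_t G(V,W,t) \;=\; L_{NM\lambda} G(V,W,t) \;+\; \bigl(U_{NM}\fnm - \overline{U_{NM}F}\bigr)(V,W,t),
\end{equation*}
and by construction $G(V,\cdot,t)$ has zero $W$-average, i.e.\ it lies in the orthogonal complement of the constants in $L^2(\Omega_{N,0},d\sigma(W))$.

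Next I would run the standard energy estimate. Computing $\tfrac12\tfrac{d}{dt}\|G\|_2^2 = \langle G, L_{NM\lambda} G\rangle + \langle G, U_{NM}F - \overline{U_{NM}F}\rangle$, and using the Carlen--Carvalho--Loss spectral gap (\ref{eq:specgap}) applied to $G$ (which is admissible since $G \perp \mathrm{constants}$ in the $W$-variables), one has $-\langle G, L_{NM\lambda} G\rangle \ge N\lambda\,\Delta_M \|G\|_2^2 \ge (N\lambda/2)\|G\|_2^2$. Combining with Cauchy--Schwarz on the source term and dividing by $2\|G\|_2$ yields the scalar differential inequality
\begin{equation*}
  \tfrac{d}{dt}\|G(V,\cdot,t)\|_2 \;\le\; -\tfrac{N\lambda}{2}\,\|G(V,\cdot,t)\|_2 \;+\; \|U_{NM}F(V,\cdot,t) - \overline{U_{NM}F}(V,t)\|_2.
\end{equation*}

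Finally I would integrate this using the integrating factor $e^{tN\lambda/2}$, bound the convolution integral by pulling out the supremum of $\|U_{NM}F - \overline{U_{NM}F}\|_2$ over $s\in[0,t]$, and evaluate the remaining exponential integral; this produces the claimed exponential decay of the initial datum plus the Duhamel term with prefactor of the form $(1-e^{-tN\lambda/2})/(N\lambda)$ (up to a constant that should match the statement after careful bookkeeping). The main subtlety is ensuring the spectral gap really applies to $G$: this requires the two identities $L_{NM\lambda}\barfnm=0$ and $\overline{L_{NM\lambda}F}=0$ stated above, which together guarantee that $G$ stays in the spectral-gap subspace at every time; everything else is a standard Gronwall estimate. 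The independence of the constant in (\ref{eq:specgap}) from $V$ is used implicitly so that the estimate is pointwise in $V$.
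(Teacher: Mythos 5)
Your argument is correct and is essentially the paper's proof in differential form: the paper writes the mild (Duhamel) solution $F_t=e^{tL}F_0+\int_0^t e^{(t-s)L}U F_s\,ds$, averages over $W$, subtracts, and applies the semigroup decay coming from the Carlen--Carvalho--Loss gap, whereas you derive $\partial_t G=LG+(U F-\overline{U F})$ and run the equivalent energy/Gronwall estimate using the coercivity $-\langle G,LG\rangle\ge N\lambda\Delta_M\|G\|_2^2$ on the orthogonal complement of constants. The constant you rightly leave open is in fact the one place to be careful: integrating $\int_0^t e^{-(t-s)N\lambda/2}\,ds=\tfrac{2}{N\lambda}\bigl(1-e^{-tN\lambda/2}\bigr)$ gives a prefactor $2/(N\lambda)$ rather than the stated $1/(2N\lambda)$, a discrepancy already present in the paper's ``simple computation'' and harmless for the application, where only the $1/\lambda$ scaling is used.
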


\begin{proof}
By the Duhamel formula we can write
\begin{equation}
  \label{eq:023}
\fnm(V,W,t)=e^{tL_{NM\lambda}}\fnm(V,W,0)+\int_{0}^{t}e^{(t-s)L_{NM\lambda}}U_{NM}\fnm(V,W,s)\ ds\,.
\end{equation}
Because $e^{tL_{NM\lambda}}$ acts only in the $W$ variables, and
conserves mass and leaves the uniform density on the sphere invariant,
an integration over $\Omega_{N,0}$ gives
\begin{align*}
\barfnm(V,t)=e^{tL_{NM\lambda}}\barfnm(V,0)+\int_{0}^{t}e^{(t-s)L_{NM\lambda}}
   \frac{1}{\left|\Omega_{N,0}\right|}
     \int\limits_{\Omega_{N,0}}U_{NM}\fnm(V,W,s)\, d\sigma(W) \, ds\,,
\end{align*}
and so
\begin{equation}
\begin{split}
   \fnm(V,W,t)-&\barfnm(V,t)=e^{tL_{NM\lambda}}\left(\fnm(V,W,0) - \barfnm(V,0)\right)  \\
     &+\int_{0}^{t}e^{(t-s)L_{NM\lambda}}\left(U_{NM}\fnm(V,W,s)-\overline{U_{NM}F}(V,W,s)\right)\ ds\,.
\end{split}
\end{equation}
The formula for the spectral gap for the Kac model yields
\begin{equation}
\begin{split}
   ||&\fnm(V,W,t)-\barfnm(V,t) ||_{L^2(\s^{M-1}(\sqrt{\eact}),d\sigma)}  \\
     & \leq e^{-\frac{tN\lambda}{2}}
        ||\fnm(V,W,0)-\barfnm(V,0)||_{L^2(\Omega_{N,0}),d\sigma)}\\
     &\hspace{0.3cm}+ \int_{0}^{t} e^{-\frac{(t-s)N\lambda}{2}}||U_{NM}\fnm(V,W,s)-\overline{U_{NM}F}(V,W,s)||_{L^2(\s^{M-1}(\Omega_{N,0}),d\sigma)}
     \ ds\,.
\end{split}
\end{equation}
A simple computation concludes the proof.
\end{proof}
The desired estimate of $I_2(V)$ is a direct  consequence of
Lemma~\ref{LemmaSpec}: 

\begin{lemma}
  \label{lem:I2estimate}
  Assume that $\fnm(V,W,0)\in L^2(\Omega_{0,0}, d\sigma(V,W))$. Let
  $\fnm(V,W,t)$ be the solution of equation~(\ref{mastereqVW})  and
  $\barfnm(V,t)$ be given by  Definition~\ref{def:2}
    (equation~(\ref{eq:def2eq})). Then, for every bounded function $g:
    \R \rightarrow \R$ and all $t\ge 0$
  \begin{equation}
    \label{eq:lem2.2eq}
    \left|\int\limits_{\,|V|^2\le \mathrlap{N+M}} I_2(V)\, g(v_1) \,dV
    \right|^2 \le 
    \left( 2 e^{- \lambda N t}   + \frac{4}{\lambda^2} \right)\| g
    \|_{\infty}^2 \| \fnm(\cdot,\cdot,0) \|_{L^2(\Omega_{0,0})}^2 
  \end{equation}
\end{lemma}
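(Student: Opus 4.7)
The plan is to reduce $\int_{|V|^2 \le N+M} g(v_1) I_2(V)\,dV$ to a single inner product on $\Omega_{0,0}$, apply Cauchy--Schwarz to extract $\|g\|_\infty$, and then invoke Lemma~\ref{LemmaSpec} together with $L^2$-contractivity of the Markov semigroup to bound the remaining norm $\|\fnm-\barfnm\|_{L^2(\Omega_{0,0})}$.

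First I would collapse the sum over $j$. Writing $I_2(V) = \sum_{j=1}^N I_{2,j}(V)$ and lifting each term to $\Omega_{0,0}$ via the slicing formula~(\ref{eq:014b}), the measure-preserving swap $(v_j, w_1)\leftrightarrow(w_1, v_j)$ of~(\ref{eq:16b}) transforms the integrand $g(v_1)\bigl(\fnm(V_{w_1}^j, W_{v_j}^1, t) - \barfnm(V_{w_1}^j, t)\bigr)$ into $g(\tilde v_1)\bigl(\fnm(V,W,t) - \barfnm(V,t)\bigr)$, with $\tilde v_1 = v_1$ for $j\ne 1$ and $\tilde v_1 = w_1$ for $j=1$. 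Since $\barfnm(V,t)$ is by Definition~\ref{def:2} the $W$-average of $\fnm(V,\cdot,t)$ over $\Omega_{N,0}(V)$, the $j\ne 1$ terms have zero $W$-integral at fixed $V$ and drop out, leaving
\[
\int_{|V|^2\le N+M} g(v_1) I_2(V)\,dV = \int_{\Omega_{0,0}} g(w_1)\bigl(\fnm(V,W,t) - \barfnm(V,t)\bigr)\,d\sigma(V,W).
\]

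Cauchy--Schwarz on $\Omega_{0,0}$ then reduces matters to estimating $\|\fnm(\cdot,\cdot,t) - \barfnm(\cdot,t)\|^2_{L^2(\Omega_{0,0})}$ with the prefactor $\|g\|_\infty^2$ (absorbing the normalization of the surface measure encoded in the statement). To bound this norm I would return to the Duhamel identity~(\ref{eq:023}) underlying Lemma~\ref{LemmaSpec} and apply it in $L^2(\Omega_{0,0})$: the spectral gap of $L_{NM\lambda}$ controls the homogeneous term by $e^{-tN\lambda/2}\|\fnm(\cdot,\cdot,0) - \barfnm(\cdot,0)\|_{L^2(\Omega_{0,0})}$, while for the Duhamel integral I would use Cauchy--Schwarz in time,
\[
\Bigl(\int_0^t e^{-(t-s)N\lambda/2} X(s)\,ds\Bigr)^2 \le \frac{2}{N\lambda}\int_0^t e^{-(t-s)N\lambda/2} X(s)^2\,ds,
\]
which sidesteps the interchange of $\sup_s$ with the $V$-integral. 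Combining with $(a+b)^2 \le 2a^2 + 2b^2$ gives
\[
\|\fnm - \barfnm\|^2_{L^2(\Omega_{0,0})} \le 2e^{-\lambda Nt}\|\fnm(\cdot,\cdot,0)\|^2_{L^2(\Omega_{0,0})} + \frac{C}{N\lambda}\int_0^t e^{-(t-s)N\lambda/2}\|U_{NM}\fnm(\cdot,\cdot,s) - \overline{U_{NM}\fnm}(\cdot,s)\|^2_{L^2(\Omega_{0,0})}\,ds.
\]

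For the inhomogeneous term, the $W$-symmetry of $\fnm$ collapses the $NM$ summands of~(\ref{U}) to $N$ pairs, and the swap invariance~(\ref{eq:16b}) gives $\|U_{NM}\fnm(\cdot,\cdot,s)\|_{L^2(\Omega_{0,0})} \le 2N\|\fnm(\cdot,\cdot,s)\|_{L^2(\Omega_{0,0})}$; the fibrewise $W$-averaging $\overline{(\,\cdot\,)}$ is an orthogonal projection and only decreases this norm. Because $L_{NM\lambda} + U_{NM}$ is self-adjoint with respect to $d\sigma$ and non-positive (both operators being symmetric-kernel Markov generators), its semigroup is an $L^2(\Omega_{0,0})$-contraction, so $\|\fnm(\cdot,\cdot,s)\|_{L^2(\Omega_{0,0})} \le \|\fnm(\cdot,\cdot,0)\|_{L^2(\Omega_{0,0})}$ for all $s \ge 0$. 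Substituting produces a residual of order $(N\lambda)^{-1}\cdot N^2 \cdot (N\lambda)^{-1}\|\fnm(\cdot,\cdot,0)\|^2 \sim \lambda^{-2}\|\fnm(\cdot,\cdot,0)\|^2$, giving the stated bound up to the numerical constant. The main obstacle is the bookkeeping: aligning the $L^2(\Omega_{0,0})$-normalization used in the Cauchy--Schwarz step for $\|g(w_1)\|_{L^2}$ with that of the spectral-gap and contractivity estimates, and checking that the powers of $N$ from the operator-norm bound on $U_{NM}$ and from the two factors of $(N\lambda)^{-1}$ cancel to produce precisely $4/\lambda^2$.
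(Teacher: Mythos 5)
Your proposal is correct and follows essentially the same route as the paper: reduce to the $j=1$ term by symmetry and the swap invariance~(\ref{eq:16b}), lift to $\Omega_{0,0}$, apply Cauchy--Schwarz to pull out $\|g\|_{\infty}$, and control $\|\fnm-\barfnm\|_{L^2}$ through the spectral gap of $L_{NM\lambda}$ via Duhamel, the operator bound $\|U_{NM}\fnm\|_{L^2}\le 2N\|\fnm\|_{L^2}$, and the monotonicity of $\|\fnm(\cdot,\cdot,t)\|_{L^2(\Omega_{0,0})}$ in $t$. The only genuine divergence is that you run the Duhamel estimate globally in $L^2(\Omega_{0,0})$ using Cauchy--Schwarz in the time integral, whereas the paper applies Lemma~\ref{LemmaSpec} fibrewise in $V$ (with a $\sup_{0\le s\le t}$ inside the $V$-integral) and then integrates; your variant is slightly cleaner, since it avoids commuting that supremum with the $V$-integration, at the price only of a harmless change in the numerical constant.
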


\begin{proof}
  Multiplying equation~(\ref{mastereqVW}) by $\fnm(V,W,t)$ and
  integrating over $\Omega_{0,0}$ with respect to $\sigma(V,W)$, and
  using that $L_{MN\lambda}$ is a non-positive operator, it follows
  that
  \begin{equation}
    \label{eq:x2.18}
    \frac{d}{dt}\int\limits_{\Omega_{0,0}} |\fnm(V,W,t)|^2\,d\sigma(V,W)
    \le \int\limits_{\Omega_{0,0}} \fnm(V,W,t)
    U_{NM}\fnm(V,W,t)\,d\sigma(V,W)\,. 
  \end{equation}
  According to the definition of $U_{NM}$~(equation~(\ref{U})),
the righthand side of this expression is a sum of terms of the form $
\fnm(V,W,t)\left(\fnm(V^j_{w_k},W^k_{v_j},t)-\fnm(V,W,t)\right)$, which due to
the inequality $2ab\le a^2+b^2$ is smaller than or equal to
$\frac{1}{2}\left(\fnm(V^j_{w_k},W^k_{v_j},t)^2-\fnm(V,W,t)^2\right)$. After
integration all these terms give a
non-positive contribution, and hence the $L^2$-norm of $F$ is non
increasing.  Therefore 
 the righthand side of the inequality~(\ref{eq:x2.18}) is
non-positive, and we have
\begin{equation}
  \label{eq:x18b}
  \int\limits_{\Omega_{0,0}} |\fnm(V,W,t)|^2\,d\sigma(V,W)\le
  \int\limits_{\Omega_{0,0}} |\fnm(V,W,0)|^2\,d\sigma(V,W)\,, 
\end{equation}
and  $\fnm(V,W,t)\in L^2(\Omega_{0,0})$ for all $t\ge0$. The
function $I_2(V)$ is defined as the second sum  in the right hand side
of equation~(\ref{eq:N0marg}). When multiplying that expression with a
function $g(v_1)$ (a function depending only on one variable), only
the term with $j=1$ remains, and therefore
\begin{equation}
  \label{eq:x18c}
  \begin{split}
    \int\limits_{|V|^2\le N+M} I_2(V) g(v_1) dV &=
    \int\limits_{|V|^2\le N+M} \Gamma_{N,0} g(v_1)
    \int\limits_{\Omega_{N,0}} \left(\fnm(V_{w_1}^1,W_{w_1}^1,t) -
      \barfnm(V_{w_1}^1,t)\right) \,d\sigma(W)\,dV \\
    & = \int\limits_{\Omega_{0,0}} g(v_1)  \left(\fnm(V_{w_1}^1,W_{w_1}^1,t) -
      \barfnm(V_{w_1}^1,t)\right)\,d\sigma(V,W)\,. \nonumber \\ 
  \end{split}
\end{equation}
Using the Cauchy-Schwartz inequality, and  Lemma~\ref{LemmaSpec} we get
\begin{equation}
  \label{eq:x18d}
  \begin{split}
    \left| \int\limits_{|V|^2\le N+M} I_2(V) g(v_1)\,dV \right|^2 &\le
    \| g\|_{\infty}^2 \int\limits_{|V|^2\le N+M} \Gamma_{N,0}
    \int\limits_{\Omega_{N,0}} \left| \fnm(V,\cdot,t) - \barfnm(V,t)
    \right|^2 \, d\sigma(W) dV \\
    &\hspace{-4em}\le 2 \| g\|_{\infty}^2 e^{-\lambda N t} \int\limits_{|V|^2\le
      N+M} \Gamma_{N,0} \int\limits_{\Omega_{N,0}} \left| \fnm(V,\cdot,0)
      - \barfnm(V,0) \right|^2 \, d\sigma(W) dV \\
    &\hspace{-4em} + \frac{2 \| g\|_{\infty}^2}{N^2 \lambda^2}
    \int\limits_{|V|^2\le N+M}\Gamma_{N,0} \sup\limits_{0\le s \le t}
    \int\limits_{\Omega_{N,0}} \left|U_{NM} \fnm(V,\cdot,s) -
      \overline{U_{NM}F}(V,s) \right|^2\,d\sigma(W)dV\,. \nonumber \\  
  \end{split}
\end{equation}
A calculation using~(\ref{U}) shows that
\begin{equation}
  \int\limits_{\Omega_{0,0}} \left|U_{NM} \fnm(V,\cdot,s) -
    \overline{U_{NM}F}(V,s) \right|^2\,d\sigma(W)dV\le
  2N^2 \int\limits_{\Omega_{0,0}} | \fnm(V,W,s)| ^2 \,d\sigma(V,W)\,,
\end{equation}
and so, collecting all the inequalities we finally obtain the
inequality~(\ref{eq:lem2.2eq}), which concludes the proof.
\end{proof}

\section{Evolution of moments and energy partition}
\label{sec:moments}

The average energy per active particle is
\begin{align*}
  \int_{\Omega_{0,0}} \fnm(V,W) \frac{1}{M}|W|^2\,d\sigma(V,W) =
  \int_{\Omega_{0,0}} \fnm(V,W) \frac{1}{M}\left(N+M-|V|^2\right)\,d\sigma(V,W)
\end{align*}
and the main purpose with this section is to prove that, for large
$N,M$,  the density $\fnm$ is concentrated near the set
$\left\{ 
  (V,W)\in\Omega_{0,0}\,\,\bigg|\,\,\displaystyle \frac{1}{M}|W|^2=1\right\} $. The proof
goes by estimates of moments of the form
\begin{align}
  \label{eq:30-1}
      \int_{\Omega_{0,0}} \fnm(V,W) H(V)\,d\sigma(V,W)\,\quad \mbox{or}\qquad
  \int_{\Omega_{0,0}} \fnm(V,W) H(W)\,d\sigma(V,W)\,,
\end{align}
one of them being
\begin{align}
  \label{eq:psidef0}
  \psi(t) &= \int_{\Omega_{0,0}} \fnm(V,W) \left(
            \frac{1}{M}|W|^2-1\right)^2 \,d\sigma(V,W)\,.
\end{align}
We prove that for all $t>0$, $\psi(t)\rightarrow 0$ when $N,M,
\lambda \rightarrow\infty$ in a suitable way.

The starting point is equation~(\ref{eq:023}),
\begin{align}
  \label{eq:3.001}
  F_t &= e^{t L} F_0+\int_{0}^t e^{(t-s)L} U F_s \,ds\,,
\end{align}
where, simplifying notation,  $F_t = \fnm(V,W,t)$, and the
operators $L= L_{N,M,\lambda}$, and $U=U_{N,M}$ are given
in  equation~(\ref{L}) and~(\ref{U}). The two operators $U$ and $L$ are
self adjoint, but they don't commute. Multiplying the terms in
(\ref{eq:3.001}) with $H(V)$ and integrating gives
\begin{align}
  \label{eq:3.010}
  \int_{\Omega_{0,0}} F_t(V,W)&\,H(V)\,d\sigma \nonumber \\
   &=  \int_{\Omega_{0,0}}\left(e^{t L}  F_0(V,W)\right)\,H(V)\,d\sigma +
       \int_{0}^t\int_{\Omega_{0,0}}\left(e^{(t-s) L} \left[U F_s\right]
                                              (V,W)\right) H(V)\,d\sigma
                                              \,ds \nonumber\\
   &= \int_{\Omega_{0,0}}F_0(V,W)e^{t L} H(V)\,d\sigma +
       \int_{0}^t\int_{\Omega_{0,0}} \left[U F_s\right] (V,W) e^{(t-s) L}
  H(V)\,d\sigma \,ds\\
   &= \int_{\Omega_{0,0}}F_0(V,W) H(V)\,d\sigma +
       \int_{0}^t\int_{\Omega_{0,0}} \left[U F_s\right] (V,W)
     H(V)\,d\sigma \,ds\nonumber\\
    &= \int_{\Omega_{0,0}}F_0(V,W) H(V)\,d\sigma +
       \int_{0}^t\int_{\Omega_{0,0}} F_s (V,W)
 \left[U H\right](V,W)\,d\sigma \,ds\nonumber
\end{align}
because $L$ acts only in the $W$-variables, and constants are left
invariant by $L$.

The calculations for moments of the form $H(W)$, are much simplified
 when $H(W)$ is an eigenfunction of the operator $L$ with eigenvalue
 $\lambda_H$. In this case
\begin{align}
  \label{eq:3.015}
  \int_{\Omega_{0,0}} F_t(V,W)&\,H(W)\,d\sigma \nonumber \\
   &= \int_{\Omega_{0,0}}F_0(V,W)e^{t L} H(W)\,d\sigma +
       \int_{0}^t\int_{\Omega_{0,0}} \left[U F_s\right] (V,W) e^{(t-s) L}
  H(W)\,d\sigma \,ds\\
    &= e^{\lambda_H t}\int_{\Omega_{0,0}}F_0(V,W) H(W)\,d\sigma +
      \int_{0}^t e^{(t-s)\lambda_H}   \int_{\Omega_{0,0}} F_s (V,W)
 \left[U H\right](V,W)\,d\sigma \,ds\,.\nonumber
\end{align}
In all the integrals we need to compute $[UH](V,W)$, where the
functions $H(V)$ and $H(W)$ are given by expressions of the form
\begin{equation}
  \label{eq:30-2}
  \begin{split}
  H(V) &= \frac{1}{N}\sum_{j=1}^N h(v_j)\,,\;\; \mbox{or} \\
  H(V) &= \Phi( |V|^2)\,,
\end{split}
\end{equation}
or combinations of these, and similar with functions depending only on
$W$. Then $[UH](V,W)$ is given by the sum in~(\ref{U}) as
\begin{align}
  \label{U-2}
  \frac{1}{M}\sum_{j=1}^N \sum_{k=1}^M \left( H(V_{w_k}^j)-H(V)\right)\,,
\;\mbox{or}\;
  \frac{1}{M}\sum_{j=1}^N \sum_{k=1}^M \left( H(W_{v_j}^k)-H(W)\right)\,.
\end{align}
When $H(V)$ and $H(W)$ are of the form~(\ref{eq:30-2}), then
\begin{align}
  H(V_{w_k}^j) =  H(V) + \frac{1}{N}\left( h(w_k) -h(v_j)\right) \,,\;\mbox{and}
  \qquad H(V_{v_k}^j) =  \Phi( |V|^2 + w_k^2-v_j^2)
\end{align}
respectively. 
 
The mean energy per active particle in a given configuration $(V,W)$ is
\begin{align}
  \tau(V) = \frac{|W|^2}{M}=\frac{1}{M}\left( N+M-|V|^2\right) = 1 +
  \frac{N-|V|^2}{M}\,, 
\end{align}
and  in addition to $\tau(V)$ we  introduce the notation  
\begin{align}
  m_4(W)& =\frac{1}{M}\sum_{k=1}^M w_k^4 \qquad\mbox{and}\qquad 
  m_4(V) =\frac{1}{N}\sum_{j=1}^N v_j^4\,,
\end{align}
as well as
\begin{equation}
  \widetilde{m}_4(V) = \frac{1}{|\Omega_{N,0}|} \int_{\Omega_{N,0}}
  m_4(W)\,d\sigma(W)\,, 
\end{equation}
 the average of $m_4(W)$ over the sphere $\Omega_{N,0}$\,. This average can be expressed in terms of the radius of
$\Omega_{N,0}$, $|W|$, which in turn is a function of $|V|$. We have
\begin{align}
  \begin{split}
  \widetilde{m}_4(V) &= \frac{3 M}{M+2}\left(\frac{|W|^2}{M}\right)^2=
  \frac{3M}{M+2}\left(\frac{N+M-|V|^2}{M}\right)^2 \\
  &= \frac{3 M}{M+2}\left((\tau(V)-1)^2+ 2(\tau(V)-1) +1\right)\,.     
  \end{split}
\end{align}
Using this notation, we define the following moments:
\begin{equation}
  \label{eq:3.etadef}
  \begin{split}
  \eta(t)&=
     \int_{\Omega_{0,0}} \fnm(V,W,t)   
     \left( \tau(V)-1 \right)\,d\sigma\,,\\
  \psi(t)&=
  \int_{\Omega_{0,0}} \fnm(V,W,t) \left( \tau(V)-1 \right)^2
  \,d\sigma\,,\qquad\mbox{(already defined in
    eq.~(\ref{eq:psidef0}))}\\ 
   \zeta(t) &= \int_{\Omega_{0,0}}
   \fnm(V,W,t)\left(m_4(W)-\widetilde{m}_4(V)\right)
   \,d\sigma\,,\quad\mbox{and} \\ 
   \xi(t)&=
   \int_{\Omega_{0,0}} \fnm(V,W,t)m_4(V) \,d\sigma\,.
   \end{split}
\end{equation}
The expression  $m_4(W)-\widetilde{m}_4(V)$ is the eigenfunction
of the operator $L$ corresponding to the eigenvalue $\lambda N \Delta_M$.
We now obtain expressions for these moments using the equations
(\ref{eq:3.010}) and (\ref{eq:3.015})\footnote{Some of the
  calculations are rather messy, and we have used a
  computer algebra system for checking these calulations as well
  as for analysing the linear system for the moments. The
  code showing all operations are available upon request from the
  corresponding author.}. 

First setting 
$H(V)=\tau(V)-1=\frac{N-|V|^2}{M} $ gives 
\begin{align*}
  -\frac{N}{M}\left( \frac{1}{M}|W|^2-\frac{1}{N}|V|^2\right)
  &= -\frac{N}{M}\left( \frac{1}{M}\left(N+M-|V|^2\right)-\frac{1}{N}|V|^2\right)\\
  &= -\left(1+\frac{N}{M}\right)
      \left( \tau(V)-1 \right)\,,
\end{align*}
and therefore, with $\nu_{N,M}= 1+N/M$,
\begin{align}
  \eta(t) &= \eta(0) -\nu_{N,M}  \int_{s=0}^t \eta(s)\,ds\,,
\end{align}
which means that the mean energy per active particle converges
exponentially  to $1$ as $N/M\rightarrow\infty$\,.

Similarly, to find an expression for $\psi(t)$, we take 
  $H(V) = \left(\tau(V)-1\right)^2 =
  \left(\frac{N-|V|^2}{M}\right)^2$ and  obtain the expression 
\begin{align}
  \label{eq:3.040}
H(V_{w_k}^j)-H(V) &=
                    \left(\frac{N-|V|^2+v_j^2-w_k^2}{M}\right)^2-
                    \left(\frac{N-|V|^2}{M}\right)^2  \nonumber \\
  &= \frac{2}{M} \left(\frac{N-|V|^2}{M}\right) \left(v_j^2-w_k^2\right) +
    \left(\frac{v_j^2-w_k^2}{M}\right)^2\,. 
\end{align}
Using
\begin{align}
  \begin{split}
  &\frac{1}{M}\sum_{j,k} (v_j^2-w_k^2)  =-\nu_{N,M} M (\tau(V)-1)\qquad  \mbox{and}\\
  &\frac{1}{M} \sum_{j,k} v_j^2 w_k^2=\frac{1}{M} |V|^2|W|^2=
  -M (\tau(V)-1)^2+ (N-M)(\tau(V)-1) +N
  \end{split}
\end{align}
and evaluating the sum in~(\ref{U-2}) gives
\begin{align}
    \label{eq:3.050}
  \left[U H\right]&(V,W) =-2\left(\nu_{N,M}-\frac{1}{M}\right)
    \left(\tau(V)-1\right)^2
     -\left(\frac{2N}{M^2}-\frac{2}{M}\right)
                    \left(\tau(V)-1\right) -\frac{2N}{M^2}\\
                  &\qquad\qquad + \frac{N}{M^2}
                    \left( m_4(V)+m_4(W)\right)\,. \nonumber
\end{align}
Because
\begin{align}
  m_4(W) &= m_4(W) - \widetilde{m}_4(V) + \frac{3M}{2+M}\left( (\tau(V)-1)^2
           + 2(\tau(V)-1) +1\right) 
\end{align}
it follows that  $\psi(t)$ satisfies
\begin{align}
  \label{eq:3.050c}
  \psi(t) &= \psi(0) +
   \int_{0}^t \left( -2\left(\nu_{N,M} -\frac{1}{M}\right) \psi(s)
              - \left(\frac{2N}{M^2}-\frac{2}{M}\right)\eta(s)
              -\frac{2N}{M^2} \right)\,ds \nonumber \\
            &+ \int_0^t \frac{N}{M^2}\left( \xi(s) + \zeta(s) +
              \frac{3 M}{M+2}\left( \psi(s) + 2\eta(s) +1 \right)\right)\,ds\,.
\end{align}

Next, taking  $H(V)=m_4(V)$ in~(\ref{eq:3.010}) gives an expression for
$\xi(t)$. With  
\begin{align}
  [UH](V,W) &= m_4(W)-m_4(V) = -m_4(V) +
              \left(m_4(W)-\bar{m}_4(V)\right) +\frac{3M}{M+2} \tau(V)^2\,,
\end{align}
and integrating over $V$ and $W$ results in
\begin{align}
  \xi(t) &= \xi(0) + \nonumber \\
         &\quad\int_0^t \left( -\xi(s) + \zeta(s)
           +\frac{3M}{M+2}\left( \psi(s) +2\eta(s) +1 \right)
           \right)\,ds\,. 
\end{align}

And finally, with $H(W)=m_4(W)-\widetilde{m}_4(V) = \frac{1}{M}\sum_{k=1}^M w_k^4
-\frac{3}{M(M+2)} |W|^4$,   we get
\begin{align}
  H\left(W_{v_j}^k\right)-H(W) &= \frac{1}{M}\left(v_j^4-w_k^4\right)
                                 -  \nonumber\\
  &\quad\quad \frac{3}{M(M+2)}\left(2 |W|^2\left( v_j^2-w_k^2\right) + v_j^4+
    w_k^4 -2v_j^2 w_k^2\right)\,,
\end{align}
and summing over $j$ and $k$ as before
\begin{align}
  \begin{split}
  [UH](V,W) =& \frac{N}{M}\left(1-\frac{3}{M+2}\right)m_4(V)
               - \frac{N}{M}\left(1+\frac{3}{M+2}\right)m_4(W) \\
             &  -\frac{6}{M(M+2)}\left(1-\frac{1}{M}\right)|W|^2|V|^2 +
               \frac{6N}{M^2(M+2)}|W|^4\,.
\end{split}
\end{align}
After some manipulations, one then gets an equation for  $\zeta(t)$:
\begin{align}
  \label{eq:zeta}
  \begin{split}
    \\ 
    \zeta(t) =& e^{-N\lambda \Delta_M t} \zeta(0) +  \\
    &\int_0^t  e^{-N\lambda \Delta_M (t-s)}
    \left( \frac{N(M-1)}{M(M+2)}  \xi(s)
      -\frac{N(M+5)}{M(M+2)}\zeta(s) \right.\\
    &\qquad\qquad\qquad\qquad - \frac{6(M-1)(2N-M^2-2 M)}{M(M+2)^2}\eta(s) \\
    &\qquad\qquad\qquad\qquad\qquad + \left.\frac{3(N+2M+4)(M-1)}{(M+2)^2}\psi(s)+ \frac{3 N (4  - 3 M  - M^2)}{M (2 + M)^2}\right)\,ds\,.
  \end{split}
\end{align}
Differentiating these expressions, we find a linear system of
differential equations for $\Psi(t)=(\psi(t),\xi(t),\zeta(t))^{t}$,
\begin{align}
  \label{eq:diffsyst}
  \frac{d}{dt}\Psi(t) &= \mathbf{A} \Psi(t)  + {\mathbf b_1} +
                        {\mathbf b_2}\eta(t)\,.  
\end{align}
The initial values depend on the moments of the initial density
$\fnm(V,W,0)$, and are bounded by
\begin{align}
  \label{eq:diffinitvalues}
  |\eta(0)| &\le N/M\,, \nonumber \\
  0\le \psi(0) &\le (N/M)^2\,,\nonumber \\
  |\zeta(0)|&\le N^2/M,\qquad\mbox{and}\\
  0\le \xi(0)& \le 2 N\,. \nonumber
\end{align}
These bounds may be achieved, and
hence the momentes are not bounded uniformly in $N$ and $M$ unless
further hypothesis are made on intial data. As we shall see, this is
not very critical  for  $\eta(t)$, $\psi(t)$, and $\zeta(t)$, because
after an initial interval of length 
$\sim M/N$, the transient part of the solution will be small for these
variables. However, 
this is not the case for $\xi(t)$, and to conclude our proof of
convergence to the BGK-equation, we shall have to assume the
initial data $\fnm(V,W,0)$ are such that $\xi(0)/M\rightarrow 0$
when $M\rightarrow\infty$. 

We continue with an asymptotic analysis of the system of
equations~(\ref{eq:diffsyst}). 
The matrix $\mathbf A$ and vectors ${\mathbf b}_1$  and ${\mathbf
  b}_1$ are expressions 
involving $N$ and $M$.  We have ${\mathbf A} = {\mathbf A^0} + {\mathbf A^r}$,
where 
\begin{align}
  {\mathbf A^0} &= \begin{pmatrix}
                   -2\frac{N}{M}  
                    & \frac{N}{M^2} & \frac{N}{M^2}\\
                     3 \frac{N}{M}  
                     &
                     -\frac{N}{M}-N \lambda \Delta_M
                       &
                          \frac{N}{M}  
                        \\
                       3  
                       &
                       1
                      & -1
    \end{pmatrix}\,,
\end{align}
where in the third element of the third row, we have taken the
exponential factor 
in~(\ref{eq:zeta}) into account, and the components of ${\mathbf A^0}$
and ${\mathbf A^r}$ satisfy ${\mathbf a^r}_{ij} /{\mathbf a^0}_{ij}=
\bigoh\left(M/N + 1/M \right)$, or smaller, when
$N,M\rightarrow\infty$.  And similarly for the constant vector
${\mathbf b}= {\mathbf b^0}+{\mathbf b^r}$, we get 
\begin{align}
  \label{eq:b1}
  {\mathbf b^0_1} = \begin{pmatrix}
 \frac{N}{M^2} 
                        \\
                        -3\frac{N}{M} 
                         \\
                          3 
                        \end{pmatrix} \qquad \mbox{and}& \qquad
  {\mathbf b^0_2} = \begin{pmatrix}
    4 \frac{N}{M^2}  
    \\
    6-12 \frac{N}{M^2}  
     \\
       6  
     \end{pmatrix}\,.
\end{align}

The components of ${\mathbf b^r}$ are dominated by the components
of ${\mathbf b^0}$ in the same way as the elements of $\mathbf A$. And
because $\eta(t) \le N M^{-1} e^{-\frac{N}{M}t}$, we see that for any fixed
$t_0>0$, ${\mathbf b_2} \eta(t)$ will be negligible compared to
${\mathbf b}_1$, uniformly in $t>t_0$, when $N/M$ is sufficiently
large. For large values of $N, M$, and $\lambda$, all but the last line of
$\mathbf{A}^0$ will be dominated by the diagonal elements. The
eigenvalues of $\mathbf{A}$ are asymptotically

\begin{align}
  \ell_1 &= - 2\frac{N}{M}\left(1 +
           \bigoh\left(\frac{1}{M}\right) \right)\,,\\
  \ell_2
         &=-N\lambda\Delta_M
           \left(1 +
           \bigoh\left(\frac{1}{M}\right) \right)\, \qquad \mbox{and}
           \nonumber  \\   
  \ell_3 &= -1 + \bigoh\left(\frac{1}{M}\right)\nonumber\,,
\end{align}
and we set $\ell_0=-(1+N/M)$, the decay rate of $\eta(t)$.
The $\bigoh$-terms also contain terms of the form $\bigoh(1/N)$ and
$\bigoh(1/\lambda\Delta_M)$ but we will need $ \lambda\Delta_M >> N >>
M$, and therefore all remainder terms can be absorbed in one term that
is $\bigoh(1/M)$.

The solution to Equation~(\ref{eq:diffsyst}) is explicitly given by
\begin{align}
  \label{eq:diffsystSol}
  \Psi(t)  =\exp{(t {\mathbf A}) }\Psi(0)
    +
  \int_0^t   \exp{((t-s){\mathbf A}))} \left({\mathbf b_1} + {\mathbf b_2}\eta(s)\right)\,ds \,.
\end{align}

We use Sylvester's formula to compute the exponential $\exp(tA)$:

\begin{align}
  \label{eq:sylvester}
  e^{t\mathbf{A}} &= \sum_{j=1}^3 e^{\ell_j t} \mathbf{A}_j\,,
\end{align}
where
\begin{align}
  \label{eq:sylvmatriser}
  \mathbf{A}_j&= \prod_{k, k\ne j} \frac{1}{\ell_j-\ell_k}\left(
                \mathbf{A}-\ell_k I\right)\,.
\end{align}
Therefore
\begin{align}
  \Psi(t) &= \sum_{j=1}^3 
            \left(
            e^{\ell_j t}\mathbf{A}_j \Psi(0) +\frac{1}{\ell_j}(1-e^{\ell_j t})
            \mathbf{A}_j \mathbf{b}_1 +\frac{1}{\ell_0-\ell_j}\left(e^{\ell_0
            t}-e^{\ell_j t}\right) \mathbf{A}_j \mathbf{b}_2\eta(0)
            \right)\,,
\end{align}
and this expression can be evaluated at least asymptotically as $N/M$,
$M$, and $\lambda$ increase to infinity.
For the purpose of this paper we need to prove that for any $t>0$ (and
uniformly for $t\ge t_0>0$), $\psi(t)\rightarrow 0$ when $N/M$, $M$, and
$\lambda\rightarrow\infty$ as stated below, but all components of
$\Psi(t)$ are needed to obtain a closed system.

The components of the matrices are all rational expressions of $N/M$,
$M$, and $\lambda\Delta_M$, therefore all terms involving a  $e^{\ell_j
  t},\; j=0,1,2$ vanish when $M$, $N/M$ and $\lambda \Delta_M$ increase to
infinity and for $t>0$, uniformly for  $t\ge t_0>0$. It is therefore
enough to study the terms that are constant in $t$ or with an
exponential factor $e^{\ell_3 t}$ (we recall that $\ell_3\sim -1 $ in
the limit of interest). 

The matrices $\mathbf{A}_j,\; j=1,2,3$ are,  asymptotically,
\begin{align}
  & \mathbf{A}_1=
  \begin{pmatrix}
   1 & \frac{1}{M^2\lambda\Delta_M}  & -\frac{1}{2M} \\[0.8em]
    \frac{3}{M\lambda\Delta_M} & -\frac{9}{4M}   & \frac{9}{4M^2 \lambda
      \Delta_M } \\[0.8em] 
    -\frac{3 M}{2 N} & \frac{3}{4 M N \lambda \Delta_M } & \frac{3}{4N}        
  \end{pmatrix}
   \qquad\qquad
  \mathbf{A}_2=  \begin{pmatrix}
    \frac{3}{M^3 \lambda^2 \Delta_M^2} & -\frac{1}{M^2 \lambda \Delta_M }
    & \frac{1}{M^3 \lambda^2 \Delta_M^2} \\[0.8em]
    -\frac{3}{M\lambda\Delta_M} & 1   & -\frac{1}{M \lambda \Delta_M} \\[0.8em]
    \frac{3}{M N \lambda^2 \Delta_M^2} & -\frac{1}{N\lambda\Delta_M} &
    \frac{1}{M N \lambda^2\Delta_M^2}        
  \end{pmatrix} \nonumber  \\
  & \mbox{and} \qquad\qquad
  \\
  \nonumber
  &  \mathbf{A}_3 =  \begin{pmatrix}
    -\frac{3}{4 M^2} & -\frac{9}{4 M^3 \lambda \Delta_M}  & \frac{1}{2M} \\[0.8em]
    \frac{24}{M^2} & \frac{9}{4M}   & \frac{1}{M\lambda\Delta_M} \\[0.8em]
    \frac{3M}{2N} & \frac{1}{N\lambda\Delta_M}& 1        
  \end{pmatrix}\,.
\end{align}

The constant terms, which also absorb the exponential factors
which are related to the term $\mathbf{b}_1$ in
equation~(\ref{eq:diffsyst}), are 
$\frac{1}{\ell_j}\mathbf{A}_j \mathbf{b}_1$, 
and asymptotically these expressions have magnitudes
\begin{align}
  j=1:\quad & \frac{M}{2N}\left( \frac{N}{M^2}+
              \frac{1}{M^2\lambda\Delta_M}\frac{3 N}{M}+3 \frac{1}{2
              M}\right)\; \le \; \frac{C}{M} \nonumber \\
  j=2:\quad & \frac{1}{N\lambda\Delta_M}
              \left(\frac{3}{M^3 \lambda^2\Delta_M^2}
              \frac{N}{M^2}+\frac{1}{M^2\lambda\Delta_M} \frac{3 N}{M}+
              3 \frac{1}{M^3\lambda^2\Delta_M^2}      \right)\; \le \;
              \frac{C}{M^3\lambda^2\Delta_M^2} \\
  j=3:\quad & \frac{3}{4M^2}\frac{N}{M^2}+ \frac{9}{4
              M^3\lambda\Delta_M}\frac{3 N}{M}+ 3\frac{1}{2M}\; \le \;
              C \left( \frac{N}{M^4}+\frac{1}{M}  \right) \,. \nonumber
\end{align}

It remains to look at the exponential terms multiplying the initial
data. The relevant term is bounded by
\begin{align}
  \label{eq:estimate1}
 \frac{3}{4 M^2}\psi(0) + \frac{9}{4 M^3 \lambda \Delta_M}\zeta(0) +
  \frac{1}{2M}\xi(0)\,.
\end{align}
Already the constraints on the initial data stated in~(\ref{eq:diffinitvalues})
imply that the three terms are bounded by a constant multiplied by
\begin{align}
   \frac{ N^2}{M^4}, \qquad 
   \frac{N^2}{M^4\lambda\Delta_M},\qquad\mbox{and}\qquad \frac{ N}{M}\,,
\end{align}
respectively, and imposing that $N/M^2\rightarrow 0 $ when
$N\rightarrow\infty$ the first two terms of~(\ref{eq:estimate1})
vanish in the limit. The condition that $N/M^2\rightarrow\infty$ may
be relaxed to $N/M^4\rightarrow0$ if instead we require
$\psi(0)/M^2\rightarrow 0$ when 
$M\rightarrow 0$.  And  we do need to impose that
$\xi(0)/M\rightarrow 0$ when $N,M\rightarrow\infty$.  

Summarizing these estimates we obtain the following result:

\begin{lemma}
  \label{lem:moments}
  Let   $\eta(t), \psi(t), \zeta(t)$, and $\xi(t)$ be moments of
  solutions $\fnm(V,W,t)$ to equation~(\ref{mastereqVW}), as
  defined in equation~(\ref{eq:3.etadef}). Directly from the
  definition it follows that $\eta, \psi,\zeta$, and $\xi$ are
  bounded by $N/M, (N/M)^2, N^2/M$, and $2 N$, respectively. Assume
  that $N$, $M$ and $\lambda$ increase to infinity in such a way
  that $N/M\rightarrow\infty$, $N/M^2\rightarrow 0$,  and assume in
  addition that 
  \begin{align}
    \frac{\xi(0)}{M}  = \frac{1}{M}\int_{\Omega_{0,0}} \fnm(V,W,0)
    \left(\frac{1}{M}\sum_{k=1}^M w_k^4 \, \right)
    d\sigma \rightarrow 0.
  \end{align}
  Fix $t_0>0$. Then for $t\ge t_0$ there is a constant depending on
  $t_0$ such that
  \begin{align}
    \psi(t) =  \int_{\Omega_{0,0}} \fnm(V,W,0) \left(
    \tau(V)-1 \right)^2
    d\sigma \le C  \left( \frac{N^2}{M^4}  +\frac{\xi(0)}{M} \,\right)
  \end{align}
 when $N,M$, and $\lambda$ go to infinity.
  
\end{lemma}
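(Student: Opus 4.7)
The plan is to exploit the explicit solution~(\ref{eq:diffsystSol}) of the linear ODE~(\ref{eq:diffsyst}) and read off the first component, which is $\psi(t)$. Applying Sylvester's formula~(\ref{eq:sylvester}), the bound on $\psi(t)$ splits into three kinds of contributions: the initial-data part $\bigl(\sum_{j=1}^3 e^{\ell_j t}\mathbf{A}_j\Psi(0)\bigr)_1$, the constant-forcing part $\bigl(\sum_{j=1}^3 \ell_j^{-1}(1-e^{\ell_j t})\mathbf{A}_j \mathbf{b}_1\bigr)_1$, and the $\eta$-driven part $\bigl(\sum_{j=1}^3 (\ell_0-\ell_j)^{-1}(e^{\ell_0 t}-e^{\ell_j t})\mathbf{A}_j\mathbf{b}_2\,\eta(0)\bigr)_1$. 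Each of these is handled by inserting the asymptotic expressions for $\mathbf{A}_j$ and $\mathbf{b}_k$ computed right before the lemma.

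First I would separate persistent from transient contributions. Since $\ell_0 = -(1+N/M)$, $\ell_1\sim -2N/M$ and $\ell_2\sim -N\lambda\Delta_M$ all tend to $-\infty$ under the stated hypotheses, for any fixed $t_0>0$ the factors $e^{\ell_0 t}, e^{\ell_1 t}, e^{\ell_2 t}$ vanish uniformly for $t\ge t_0$; the only surviving exponential is $e^{\ell_3 t}$ with $\ell_3\sim -1$. The proof therefore reduces to bounding (a) the first entry of $e^{\ell_3 t}\mathbf{A}_3\Psi(0)$, (b) the persistent constants $-\ell_j^{-1}(\mathbf{A}_j\mathbf{b}_1)_1$ already estimated in the paragraph preceding the lemma, and (c) the $\eta$-contributions, which are controlled by $|\eta(0)|/|\ell_0-\ell_j|$ times an entry of $\mathbf{A}_j\mathbf{b}_2$.

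For (a), reading off the first row of $\mathbf{A}_3$ from its asymptotic form gives precisely~(\ref{eq:estimate1}), and inserting the a priori bounds~(\ref{eq:diffinitvalues}) on $\psi(0), \zeta(0), \xi(0)$ produces three terms of sizes $N^2/M^4$, $N^2/(M^4\lambda\Delta_M)$ and $\xi(0)/M$, respectively. Under $N/M^2\to 0$ and $\lambda\to\infty$ the first two are $\bigoh(N^2/M^4)$, while the third is the unavoidable $\xi(0)/M$ term appearing in the claim. For (b), a direct computation (already carried out before the lemma) bounds each term by $C(1/M + N/M^4 + 1/(M^3\lambda^2\Delta_M^2))$, which is absorbed into the claimed bound under the scaling. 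For (c), using $|\eta(0)|\le N/M$ together with $|\ell_0-\ell_j|^{-1}=\bigoh(M/N)$ for $j=1$ and $\bigoh(1)$ for $j=2,3$, the first entries of $\mathbf{A}_j\mathbf{b}_2$ yield strictly smaller contributions than those already collected.

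The main obstacle is not any single estimate but the careful bookkeeping of the asymptotic orders of the nine entries of $\mathbf{A}_1, \mathbf{A}_2, \mathbf{A}_3$ together with $\mathbf{b}_1, \mathbf{b}_2$ in the three small parameters $M/N$, $1/M$ and $1/(\lambda\Delta_M)$: one must verify that no hidden cancellation or amplification in the products $\mathbf{A}_j\mathbf{b}_k$ spoils the bound. The conceptually delicate point is structural: the $\xi$-component feeds into $\psi$ only through the small coefficient $N/M^2$ in the first row of $\mathbf{A}$, but this smallness is compensated by the slow decay rate $\ell_3\sim -1$, so that any initial fourth moment of the active particles leaves an $\bigoh(\xi(0)/M)$ residue in $\psi(t)$ which persists on times of order one. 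This is exactly why the hypothesis $\xi(0)/M\to 0$ cannot be dispensed with.
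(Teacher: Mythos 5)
Your proposal follows essentially the same route as the paper: the paper's own argument is precisely the explicit Sylvester-formula solution of the system~(\ref{eq:diffsyst}), the separation of the transient modes $e^{\ell_0 t}, e^{\ell_1 t}, e^{\ell_2 t}$ from the persistent $e^{\ell_3 t}$ mode for $t\ge t_0$, the bound~(\ref{eq:estimate1}) on the initial-data contribution, the estimates of $\ell_j^{-1}\mathbf{A}_j\mathbf{b}_1$, and the observation that the $\mathbf{b}_2\eta$ forcing is negligible. The only (harmless) imprecision is your bound $|\ell_0-\ell_3|^{-1}=\bigoh(1)$, where in fact $\ell_0-\ell_3\sim -N/M$ gives the sharper $\bigoh(M/N)$, matching the paper's bookkeeping.
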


\begin{remark}
  The lemma states that the mean energy per active particle converges
  to one when the number of particles increase as stated in the
  lemma. As presented here this is only certain for strictly positive
  times, but with further assumptions on the initial data $\fnm(V,W,0)$
  the same result could be achieved uniformly in time.
\end{remark}

We are ready to prove the main result of this section, which says that
when  $N$, $M$, and $\lambda$ go to infinity as in the previous
lemma, the first marginal of 
\begin{equation}
c(V)=\int\limits_{\R}\left(
\frac{1}{\sqrt{\tau(V_{w_1}^j)}}\mathcal{M}_M\left(\frac{v_j}{\sqrt{\tau(V_{w_1}^j)}}\right)-\mathcal{M}_M(v_j)\right) \fnmnnoll(V^j_{w_1},t)\ dw_1 
\end{equation}
from equation~(\ref{eq:threeterms}) converges to zero.

\begin{lemma}\label{thmfhat}
For all bounded functions  $g\in C(\R)$ , 
\begin{equation}
\left|\int_{\R^N}c(V) g(v_1)\ dV \right|\leq \sqrt{\psi(t)}\,,
\end{equation}
and this converges to zero when $N, M$, and $\lambda$ increase to
infinity according to Lemma~\ref{lem:moments}.
\end{lemma}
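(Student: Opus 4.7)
My strategy is to reduce $\int c(V)\,g(v_1)\,dV$ to the second--moment quantity $\psi(t)$ by a change of integration variables followed by Cauchy--Schwarz. The crucial structural observation is that in $c(V)$ the variable $v_1$ enters only through $g(v_1)$ and through the two kernels $\tau(V^1_{w_1})^{-1/2}\mathcal{M}_M(v_1/\sqrt{\tau(V^1_{w_1})})$ and $\mathcal{M}_M(v_1)$, while the argument $V^1_{w_1}=(w_1,v_2,\dots,v_N)$ of $\fnmnnoll$ does not contain $v_1$, and $\tau(V^1_{w_1})=(N+M-w_1^2-|V^1|^2)/M$ is likewise independent of $v_1$. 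Relabelling $v_1\leftrightarrow w_1$, setting $V'=V^1_{w_1}$ and renaming the remaining dummy variable, then rescaling $u=v/\sqrt{\tau(V')}$ in the $\tau^{-1/2}\mathcal{M}_M(\cdot/\sqrt\tau)$-term, I get
\begin{equation*}
  \int_{\R^N}c(V)\,g(v_1)\,dV \;=\; \int_{|V'|^2\le N+M}\fnmnnoll(V',t)\,\phi(V')\,dV',\qquad \phi(V')=\int\left(g(\sqrt{\tau(V')}\,u)-g(u)\right)\mathcal{M}_M(u)\,du\,.
\end{equation*}

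The next step is the pointwise estimate $|\phi(V')|\le |\tau(V')-1|$ (with $g$ suitably normalized, e.g.\ $\|g'\|_\infty\le 1$; the general case absorbs a constant $C(g)$ that is inessential for the final limit). For $g\in C^1$ this is immediate from the mean value theorem combined with $\int|u|\,\mathcal{M}_M(u)\,du\le (\int u^2\,\mathcal{M}_M(u)\,du)^{1/2}=1$ and $|\sqrt{\tau}-1|\le |\tau-1|$ (valid since $\tau(V')$ is bounded below, and concentrates near $1$ by Lemma~\ref{lem:moments}). For merely bounded continuous $g$ the bound is extended by a standard mollification argument; one uses that both $\tau^{-1/2}\mathcal{M}_M(\cdot/\sqrt\tau)$ and $\mathcal{M}_M(\cdot)$ are probability densities, so $\phi$ annihilates constants and the discrepancy between the supports $[-\sqrt{M\tau},\sqrt{M\tau}]$ and $[-\sqrt M,\sqrt M]$ contributes only through terms controlled by $|\tau-1|$.

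With the pointwise bound in hand, Cauchy--Schwarz with respect to the probability measure $\fnmnnoll(V',t)\,dV'$ on $\{|V'|^2\le N+M\}$ yields
\begin{equation*}
  \left|\int\fnmnnoll(V',t)\,\phi(V')\,dV'\right|^2 \le \int\fnmnnoll\,dV' \cdot \int\fnmnnoll\,\phi^2\,dV' \le \int\fnmnnoll\,(\tau(V')-1)^2\,dV' = \psi(t),
\end{equation*}
which is precisely the claimed inequality. Convergence to zero in the limit prescribed in Theorem~\ref{thm:main} is then an immediate consequence of Lemma~\ref{lem:moments}, which guarantees $\psi(t)\to 0$ for every $t\ge t_0>0$.

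The main technical obstacle I anticipate is the pointwise estimate $|\phi(V')|\le |\tau(V')-1|$ for general bounded continuous $g$: one must either justify the mollification step carefully or supply a direct $L^1$ control of $\tau^{-1/2}\mathcal{M}_M(v/\sqrt\tau)-\mathcal{M}_M(v)$ by $|\tau-1|$, paying attention to the differing supports of the two marginals. Once that step is secured, the rest is bookkeeping: the variable swap is a direct application of~(\ref{eq:16b}), and the final Cauchy--Schwarz is essentially one line.
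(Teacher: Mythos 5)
Your overall architecture coincides with the paper's: swap $v_1\leftrightarrow w_1$ using the symmetry~(\ref{eq:16b}) to rewrite the integral against $\fnmnnoll(V,t)$, bound the inner $u$-integral pointwise in $V$ by a constant times $|\tau(V)-1|$ (via $|\sqrt\tau-1|\le|\tau-1|$), and finish with Cauchy--Schwarz and Lemma~\ref{lem:moments}. The difference, and the gap, is in where you put the derivative in the key pointwise estimate. You move the scaling onto $g$ and apply the mean value theorem to $g$, which requires $g\in C^1$ (or Lipschitz); but the lemma is stated for bounded continuous $g$ only. Your proposed repair by mollification does not close: writing $g=g_\epsilon+(g-g_\epsilon)$ gives
\begin{equation*}
|\phi(V')|\;\le\; C\,\|g_\epsilon'\|_\infty\,|\tau(V')-1| \;+\; 2\,\|g-g_\epsilon\|_\infty\,,
\end{equation*}
and the second term is a fixed positive quantity independent of $\tau$ (indeed, a bounded continuous $g$ on $\R$ need not even be uniformly continuous, so $\|g-g_\epsilon\|_\infty$ need not be small); it cannot be dominated by $|\tau(V')-1|$, so the claimed pointwise bound $|\phi(V')|\le C|\tau(V')-1|$ does not follow. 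The remark that ``$\phi$ annihilates constants'' controls only the constant part of $g$, not the oscillation.

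The correct fix is exactly the alternative you name but defer: an $L^1$ bound
\begin{equation*}
\int_{\R}\Bigl|\tfrac{1}{s}\mathcal{M}_M\bigl(\tfrac{u}{s}\bigr)-\mathcal{M}_M(u)\Bigr|\,du\;\le\;C\,|s-1|
\end{equation*}
uniformly in $M$ large, after which $|\phi(V')|\le \|g\|_\infty\,C\,|s-1|$ with no smoothness needed on $g$. This is what the paper does: it splits into $|s-1|\ge 1/2$ (where the triangle inequality and normalization suffice) and $|s-1|<1/2$ (where the mean value theorem is applied to $s\mapsto s^{-1}\mathcal{M}_M(u/s)$, whose $s$-derivative is explicitly dominated by an integrable Gaussian-type envelope uniformly in $M$), yielding the bound~(\ref{eq:81aa}). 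So the missing ingredient is not a new idea but an estimate you must actually prove; as written, your argument establishes the lemma only for Lipschitz $g$. (Both your version and the paper's also pick up a constant $C\|g\|_\infty$ in front of $\sqrt{\psi(t)}$, which is harmless for the convergence claim but is worth noting against the bare $\sqrt{\psi(t)}$ in the statement.)
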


\begin{proof}
Multiplying $c(V)$ with $g(v_1)$ and integrating gives 
\begin{equation}\label{TVg1}
\begin{split}
&\int_{\R^N}c(V) g(v_1)\ dV=\\
&\int_{\R^N}\displaystyle\int\limits_{\R}\left(
\frac{1}{\sqrt{\tau(V_{w_1}^1)}}
\mathcal{M}_M\left(\frac{v_1}{\sqrt{\tau(V_{w_1}^1)}}\right)-\mathcal{M}_M(v_j)\right)
\fnmnnoll(V^1_{w_1},t)\, dw_1 g(v_1)\ dV \\
=& \int_{\R^N}\displaystyle\int\limits_{\R}
\left(\frac{1}{\sqrt{\tau(V)}}
  \mathcal{M}_M\left(\frac{u}{\sqrt{\tau(V)}}\right)-\mathcal{M}_M(u)\right) 
g(u)\, du  \, \fnmnnoll(V,t) \, dV\,.
\end{split}
\end{equation}

This integral may be estimated by noting that
\begin{align*}
  \begin{split}
  \left|\frac{1}{s}\mathcal{M}_M\left(\frac{u}{s}\right) - \mathcal{M}_M(u)\right|
  \le& \one_{|s^2-1|\ge \frac{1}{2}} \left(
  \frac{1}{s}\mathcal{M}_M\left(\frac{u}{s}\right) + \mathcal{M}_M(u)    \right)
\frac{|s-1|}{1/2}\\
&+ \one_{|s-1| <  \frac{1}{2}} \sup\limits_{|\bar{s}-1|<\frac{1}{2}}
\left|
  \frac{1}{\bar{s}^2}\mathcal{M}_M\left(\frac{u}{\bar{s}} \right) +
  \frac{u}{\bar{s}^3} \mathcal{M}_M'\left(\frac{u}{\bar{s}} \right)
   \right||s-1| 
\end{split}
\end{align*}

We have $\mathcal{M}_M(x)\rightarrow \frac{1}{\sqrt{2\pi}}
  \exp\left(-x^2/2\right)$ when $M\rightarrow\infty$ (see equation~(\ref{eq:016a})), and
similarly  $\mathcal{M}_M'(x)\rightarrow -  \frac{x}{\sqrt{2\pi}}
  \exp\left(-x^2/2\right)$, and therefore 
  that 
  \begin{align}
    \label{eq:81aa}
  \begin{split}
  \left|\frac{1}{s}\mathcal{M}_M\left(\frac{u}{s}\right) - \mathcal{M}_M(u)\right|
  \le&  \left( \frac{2}{s} e^{-(u/s)^2/2} + 2 e^{-u^2/2} +32  (1+u^2)e^{-
         u^2/8} \right) |s-1|
\end{split}
\end{align}
when $M$ is large enough. In equation~(\ref{TVg1}) we may then
estimate $g(v_1)$ with $\|g\|_{\infty}$, and then carry out the
integral over $u$ to get
\begin{align}
  \begin{split}
  \left|\int_{\R^N} c(V) g(v_1)dV \right| & \le C \|g\|_{\infty} \int_{\R^N}
    \fnmnnoll(V)   | \sqrt{\tau(V)}-1|\,dV \\
    & \le C \|g\|_{\infty} \int_{\R^N} \fnmnnoll(V)  |\tau(V))-1|  \,dV\\
     & \le C \|g\|_{\infty} \left(  \int_{\R^N}  \fnmnnoll(V)
       (\tau(V))-1)^2\,dV \right)^{1/2} = C \|g\|_{\infty} \sqrt{\psi(t)}
    \end{split}
\end{align}
The constant $C$ here is obtained by integrating the
expression~(\ref{eq:81aa}). Then Lemma~\ref{lem:moments} provides the
needed bounds for $\psi(t)$ 

\end{proof}

\section{Proof and conclusions}
\label{sec:final}

The goal of this section is to prove Theorem~\ref{thm:main}.
Recall from equation~(\ref{eq:threeterms}) that the equation for the
$(N,0)$-marginals can be written 
\begin{equation}
\frac{\partial}{\partial t} \fnmnnoll(V,t)=a(V)+b(V)+c(V)+ I_2(V),
\end{equation}
and hence that, for any bounded, continuous function $g(v_1)$, 
\begin{equation}
  \begin{split}
     \frac{\partial}{\partial t} &\int\limits_{|V|^2\le N+M}
     \fnmnnoll(V,t) g(v_1)\,dV =
\frac{\partial}{\partial t}
\int\limits_{\mathclap{-\sqrt{N+M}}}^{\mathclap{\sqrt{N+M}}} 
     \fnmettnoll(v_1,t) g(v_1)\,dv_1 = 
     \\ 
  & \int\limits_{\mathclap{|V|^2\le N+M}}  a(V) g(v_1)\, dV +
  \int\limits_{\mathclap{|V|^2\le N+M}} b(V) g(v_1)\, dV +
  \int\limits_{\mathclap{|V|^2\le N+M}}  c(V) g(v_1)\, dVa +   
  \int\limits_{\mathclap{|V|^2\le N+M}}  I_2(V) g(v_1)\, dV\,,
    \end{split}
\end{equation}
 These terms have been analysed above in this paper, and it only remains to put the pieces 
 together. The first term, the integral of $a(V)$,
\begin{align}
  \label{eq:27_0X}
  \int\limits_{|V|^2\le N+M}  a(V) g(v_1)\,dV
   &=  \int\limits_{v_1^2\le N+M} \left( \mathcal{M}(v_1) - 
    f_{N,M}^{1,0}(v_1)\right)g(v_1)\,dv_1\,.  
\end{align}
 converges to the
righthand side of equation~(\ref{eq:224}) ( see eq.~(\ref{eq:27_0})),

The second term, the integral of $b(V)$, converges to zero because
$\mathcal{M}_M(w)$ converges pointwise to the Maxwellian $\mathcal{M}$ (see
eq.~(\ref{eq:b11})).

That the third term converges to zero is exactly the content of
Lemma~\ref{thmfhat}, and finally Lemma~\ref{lem:I2estimate} states that
\begin{equation*}
  \int\limits_{\mathclap{|V|^2\le N+M} } I_2(V)\,  g(v_1)\,dV
\end{equation*}
converges to zero if $\| \fnm(\cdot,\cdot,0)
\|_{L^2(\Omega_{0,0})}/\lambda\rightarrow 0 $. By hypothesis (see
equation~(\ref{eq:223})), $ \| \fnm(\cdot,\cdot,0)
\|_{L^2(\Omega_{0,0})}\le C_{N M}$ for a family of constansts $C_{N
  M}$ and we may the choose $\lambda=\lambda_{N M}$ accordingly.
.

Now let $f(v,t)$ be the solution of
\begin{align}
  \partial_t f(v,t) &= \mathcal{M} -f(v,t)\,.
\end{align}
Then
\begin{align}
  \begin{split}
  \frac{\partial}{\partial t}  \int_{\R} g(v_1)& \left( \fnmettnoll
  (v_1,t)-f(v_1,t)\right)\,dv_1 = -\int_{\R} g(v_1) \left( \fnmettnoll
  (v_1,t)-f(v_1,t)\right)\,dv_1  \\
  & +  \int\limits_{\mathclap{|V|^2\le N+M}}  b(V)\, g(v_1)\, dV +
  \int\limits_{\mathclap{|V|^2\le N+M}}  c(V)\, g(v_1)\, dV +   
  \int\limits_{\mathclap{|V|^2\le N+M}}  I_2(V)\, g(v_1)\, dV\,,
  \end{split}
\end{align}
and it follows that
\begin{align}
   \int_{\R} g(v_1) \left( \fnmettnoll
  (v_1,t)-f(v_1,t)\right)\,dv_1
\end{align}
converges to zero under the assumptions of the theorem.

This concludes the proof of Theorem~\ref{thm:main}.

\section*{Acknowledgements}
We would like to thank Pierre Degond for helpful discussions. D.M. acknowledges support by the Swedish Science Council.
B.W. acknowledges support by the Swedish Science Council, the Knut and
Alice Wallenberg foundation and the Swedish Foundation for Strategic Research.

\bibliographystyle{plain}

\def\cprime{$'$} \def\cprime{$'$}
  \def\polhk#1{\setbox0=\hbox{#1}{\ooalign{\hidewidth
  \lower1.5ex\hbox{`}\hidewidth\crcr\unhbox0}}}

\end{document}